\numberwithin{equation}{section}
\let\originalmiddle=\middle
\def\middle#1{\mathrel{}\originalmiddle#1\mathrel{}}
\begin{document}

\title{Formalizing contextuality in sequential scenarios}
\author{Kim Vallée}
\affiliation{Sorbonne University, CNRS, LIP6, F-75005 Paris, France}
\email{kim.vallee@lip6.fr}
\orcid{0000-0002-4743-3984}

\author{Damian Markham}
\affiliation{Sorbonne University, CNRS, LIP6, F-75005 Paris, France}
\date{24 November 2025}

\begin{abstract}
	This paper provides a framework for characterizing sequential scenarios, allowing for the identification of contextuality given empirical data, and then provides precise operational interpretations in terms of the possible hidden variable model explanations.
	Sequential scenarios are different in essence from non-local scenarios and standard frameworks for contextuality as each instrument is allowed to change the state as it enters subsequent instruments. 
    Thus, it is necessary to formulate the possible state update in any hidden variable model description.
	Here we explore such hidden variable models for sequential scenarios, and we develop on the notion of no-disturbance: an instrument $A$ does not disturb another instrument $B$ if the statistics of $B$ are independent of whether $A$ was measured or not.
We define non-contextuality inequalities for the sequential scenario, and show that violation implies that the data cannot be explained by a hidden variable model that is both deterministic and not disturbing in this sense.
We further provide a translation from standard contextuality frameworks to ours, providing sequential versions which carry over the same inequalities and measures of contextuality, but now with the sequential interpretations stated.

\end{abstract}

\maketitle

\clearpage

\setcounter{tocdepth}{2}
\tableofcontents

\clearpage

\section{Introduction}

Quantum theory and quantum information have many applications, from communication~\cite{spekkens2009PreparationContextualityPowers} to computation speed-up~\cite{howard2014ContextualitySuppliesMagic,raussendorf2013ContextualityMeasurementbasedQuantum}. However, developing these applications requires us to explore more fundamental features of quantum mechanics such as non-locality or contextuality. We will focus on the latter, contextuality, which takes its root in the Kochen-Specker theorem~\cite{kochen1967ProblemHiddenVariables} but has seen renewed interest in recent years~\cite{spekkens2005ContextualityPreparationsTransformations,abramsky2011SheaftheoreticStructureNonlocality,dzhafarov2016ContextualitybyDefaultBriefOverview,acin2015CombinatorialApproachNonlocality,cabello2014GraphTheoreticApproachQuantum}.
In order to validate or invalidate contextuality as a true property of nature, many experiments have been made~\cite{budroni2022KochenSpeckerContextuality}, some of them actually using sequential scenarios, i.e. measuring the system multiple times in a row without destroying it~\cite{lapkiewicz2011ExperimentalNonclassicalityIndivisible,ahrens2013TwoFundamentalExperimental,chen2023ExperimentalDemonstrationQuantum,zhang2013StateIndependentExperimentalTest}. 
At the same time, formal frameworks for contextuality have been powerful tools for understanding its role and application across quantum information, from quantum computation \cite{raussendorf2013ContextualityMeasurementbasedQuantum,howard2014ContextualitySuppliesMagic}, to quantum metrology \cite{lostaglio2020CertifyingQuantumSignatures}, quantum communication complexity \cite{gupta2023QuantumContextualityProvides} and many aspects of quantum cryptography \cite{horodecki2010ContextualityOffersDeviceindependent,spekkens2009PreparationContextualityPowers}. Nonetheless, most of these formal frameworks work with collections of measurements which can be measured together, without specifying how, but they are not able to capture the structure imposed by sequential scenarios explicitly. 
Extending these frameworks, and their interpretations, to sequential scenarios opens the door to understanding and demonstrating these non-classical effects in many more situations.

There are several works which explore the relationship between contextuality and sequential scenarios~\cite{guhne2010CompatibilityNoncontextualitySequential,budroni2019ContextualityMemoryCost,abramsky2023CombiningContextualityCausality,searle2024CorrelationCombinatoricsCausal}, all with different motivations. In particular, a recent line of research considered finite memory in sequential scenarios~\cite{hoffmann2018StructureTemporalCorrelations,kleinmann2011MemoryCostQuantum,budroni2019MemoryCostTemporal,budroni2019ContextualityMemoryCost,searle2024CorrelationCombinatoricsCausal}, allowing one to state the memory requirement of any system to perform some correlations.
There also have been some progress relating non-local scenarios to prepare and measure scenarios~\cite{wright2023InvertibleMapBell,catani2024ConnectingXORXOR,baroni2024QuantumBoundsCompiled}, and on prepare and measure scenario with one transformation~\cite{pusey2015LogicalPrePostselection,leifer2005LogicalPrePostSelection}.
Our paper complements the existing approaches, which either lack a clear hidden variable model, cannot reproduce all the scenarios we allow in our framework or have different motivations. In particular, we focus on a notion of contextuality derived from the sheaf theoretic framework for contextuality, 
which we then relate to the existence of hidden variable model explanations with certain constraints.
In order to do this, we propose a general framework for sequential scenarios, and their interpretations, which builds on the ideas of the General Probabilistic Theory (GPT) and Operational Probabilistic Theory (OPT) formalisms~\cite{plavala2023GeneralProbabilisticTheories,heinosaari2019NofreeinformationPrincipleGeneral,dariano2017QuantumTheoryFirst,dariano2022IncompatibilityObservablesChannels}. 

At the same time we must capture in our framework the idea of when measurements, and the underlying instruments, can be understood to be measurable at the same time, or independent of order. 
One common way of doing this is to define instrument compatibility. 
However, we will not use this notion, for two reasons, first instrument compatibility has many definitions~\cite{buscemi2023UnifyingDifferentNotions}, and it is not clear which definition should be used in which case. Secondly, it is relatively demanding to check operationally~\cite[Section 2.8.4]{dariano2017QuantumTheoryFirst}, as it would require a tomographically complete set of preparations and measurements, in addition to the instruments.
Instead, we propose another notion called instrument no-disturbance inspired from~\cite{heinosaari2010NondisturbingQuantumMeasurements}, which is also operational and only requires a tomographically complete set of preparations to be checked.
We note that no-disturbance -- as an idea -- has been studied a lot and can be related to compatibility in some approaches (e.g. in OPT~\cite{dariano2022IncompatibilityObservablesChannels}). Our proposed formulation is motivated by its simplicity, and as far as we know, it has not been formalized before in this way for general hidden variable models.

Combining the two points above, we give a formal framework to study contextuality in sequential scenario with operational assumptions, and we go beyond by showing how to relate measurements scenarios~\cite{abramsky2011SheaftheoreticStructureNonlocality} to the notion of sequential scenarios we introduce.

\bigskip
The structure of the paper is as follows. We start by defining what sequential scenarios are in Section~\ref{sec:sequential-scenario}, and what is a hidden variable model (HVM) in such sequential scenarios. As an example, we make sure to retrieve quantum theory as a special case of HVMs in sequential scenarios.
We follow up by introducing operational considerations and HVM restrictions in Section~\ref{sec:operational-restrictions}, looking at outcome determinism, no-disturbance and outcome independence. 
We introduce the formal notion of non-contextuality for sequential scenarios in Section~\ref{sec:non-contextuality}, showing that the non-contextual set is a convex polytope and that the contextual fraction holds. 
Here contextuality refers to the property that statistics of individual `contexts' or, in our case, instrument sequences, can be understood as coming form marginals of some global probability distribution.
The main point of our paper is reached in Section~\ref{sec:restriced_HVMs_and_non-contextuality}, where we make a formal link between operational restrictions and contextuality. 
We show in Theorem~\ref{thm:ND-and-OD-are-NC} that contextuality, as formally defined above, is equivalent to the existence of an HVM that is both non-disturbing, and outcome deterministic. We also show that for scenarios where sequences are at most length two non-contextuality can also be shown to be equivalent to there being an HVM that is non-disturbing and outcome independent.
Finally, we propose a mapping between measurement scenarios as described by~\cite{abramsky2011SheaftheoreticStructureNonlocality} and sequential scenarios in Section~\ref{sec:meas-scen-to-seq}.

\section{Sequential scenarios} \label{sec:sequential-scenario}

In order to talk about features of a sequential scenario, it is necessary to give a precise description of what we consider to be sequential scenario. We will roughly follow the steps in the Sheaf theoretic framework for contextuality~\cite{abramsky2011SheaftheoreticStructureNonlocality} where the same is done for measurement scenarios. First we introduce the notion of a sequential scenario followed by empirical behaviours on sequential scenarios. Then, with these building blocks, we will have the necessary material to talk of hidden variable models and, finally, of contextuality.

\subsection{Definitions}

In sequential scenarios, we are working with \textit{instruments}:
\begin{definition}[Instrument]
    An instrument $A\colon \Sigma_A \to \Sigma_A' \times O_A$ takes as input a system in state a $\sigma_A \in \Sigma_A$ and outputs a system in a state $\sigma_A' \in \Sigma_A'$ together with a classical outcome $a \in O_A$. Where $\Sigma_A$ and $\Sigma_A'$ are state spaces and $O_A$ is the set of classical outcomes allowed by the instrument.
\end{definition}
Instruments have had some interest in the literature, either in quantum theory~\cite{buscemi2023UnifyingDifferentNotions,heinosaari2010NondisturbingQuantumMeasurements} or in GPTs~\cite{plavala2023GeneralProbabilisticTheories} and OPTs~\cite{dariano2022IncompatibilityObservablesChannels} -- therein called tests.

For some instrument $B$, it might be that its outcome statistics remain unchanged whether $A$ is placed before or not. In that case we say that $A$ does not disturb $B$:
\begin{definition}[Instrument no-disturbance] \label{def:instrument_no-disturbance}
	Let two instruments $A: \Sigma \to \Sigma \times O_{A}$ and $B: \Sigma \to \Sigma' \times O_{B}$. We say that $A$ does not disturb $B$, written $A \nd B$, if for every outcome $b \in O_{B}$ and every state $\sigma \in \Sigma$ we have:
    \begin{equation} \label{eq:instrument-no-disturbance}
        p_{B}(b\vert \sigma) = \sum_{a, \sigma'} p_{A}(a \vert \sigma) p_{A}(\tilde{\sigma} \vert a, \sigma) p_{B}(b\vert \tilde{\sigma})
    \end{equation}
    where $p_{A}(a \vert \sigma)$ (respectively $p_{B}(b \vert \sigma)$) is the probability to have outcome $a$ ($b$) when $A$ ($B$) is performed, and the input state is $\sigma$. Similarly, $p_{A}(\tilde{\sigma} \vert a, \sigma)$ is the probability for instrument $A$ to prepare the state $\tilde{\sigma}$ given the input state $\sigma$ and the outcome $a$.
\end{definition}

An immediate remark about Equation~\eqref{eq:instrument-no-disturbance} is that one could write $p_{A}(a \vert \sigma) p_{A}(\tilde{\sigma} \vert a, \sigma)$ as $p_{A}(\tilde{\sigma}, a\vert \sigma)$ by using the chain rule. Throughout this paper we will keep the first notation, with two terms, in order to dissociate the process of generating a new state and the process of generating an output.

In order to test instrument no-disturbance, one requires \textit{only} a tomographically complete set of preparation, since it should hold for all state. It should be contrasted with compatibility, which requires \textit{at least} a tomographically complete set of measurements and states according to~\cite[Section 2.8.4]{dariano2017QuantumTheoryFirst}.

Definition~\ref{def:instrument_no-disturbance} deserves some additional comments regarding the existing literature. 
First, the term no-disturbance, or the \enquote{Gleason property}, has sometimes been used in the literature~\cite{cabello2010NonContextualityPhysicalTheories,ramanathan2012GeneralizedMonogamyContextual} to describe the compatibility of marginals, i.e.:
\begin{equation}
	\sum_{b} p(a,b) = \sum_{c} p(a,c) = p(a)
\end{equation}
It is not equivalent to our definition, and does not carry the same insights. However, these two notions are related as we mention in Section~\ref{ssec:no-disturbance-and-com}.
	On the other hand, No Signalling In Time (NSIT) from~\cite{kofler2013ConditionMacroscopicRealism} is similar to our notion of no-disturbance, but it is applied differently; NSIT should apply unconditionally to all instruments, whereas no-disturbance is assumed to hold for only some set of instruments.
Finally, we do not refer here to the notion of state no-disturbance, as in the \enquote{no information without disturbance} principle~\cite{busch2009NoInformationDisturbance,heinosaari2016InvitationQuantumIncompatibility}, but rather to outcome no disturbance -- i.e. the fact that the outcome of an instrument is independent of the previous instrument.

When one has multiple sequences of instruments, this forms a \textit{sequential scenario}:
\begin{definition}[Sequential scenario] \label{def:seq-scenario}
    A sequential scenario is given by a tuple $\XMOSEQ$ where:
    \begin{itemize}
        \item $X$ is the set of instrument labels;
        \item $\mathcal{S}$ the set of sequences, where a sequence $S \in \mathcal{S}$ is a set of instrument labels indexed by their ordering $S = \{ A_i \mid A \in X, i \in I\}$ where $I \subset \mathbb{N}$ is an indexed set. The set of all possible sequences given a set $X$ and an indexed set $I$ is written $\mathfrak{S}_I^X$; 
        \item $O = (O_x)_{x \in X}$ is the set of classical outcomes for each instrument. When we consider a sequence of instruments $S \in \mathfrak{S}_I^X$, we define $O_S$ as the Cartesian product of the possible outcomes:
	\begin{equation}
            O_S = \prod_{ \substack{s \in S, x \in X \\ i \in I \\ x_i = s}} O_x
	.\end{equation} 
        Where $I$ is an indexed set and $S$ an arbitrary sequence.
    \end{itemize}
\end{definition}

We point out two implicit assumptions which are made in sequential scenarios. First, all the first instruments in a sequence must have the same input state space, say $\Sigma_{\textup{in}}$, and second whenever two instruments are made in a sequence they must have matching state space. As an example, let $A\colon \Sigma_A \to \Sigma_A' \times O_A$ and $B\colon \Sigma_B \to \Sigma_B' \times O_B$ such that $S = \set{A_1, B_2}$ then one must have $\Sigma_A' = \Sigma_B$. Moreover, since $A_1$ is the first instrument in the sequence we necessarily have $\Sigma_A = \Sigma_{\textup{in}}$.

We give below an example with a sequential version of the KCBS scenario~\cite{klyachko2008SimpleTestHidden,dourdent2018ContextualityWitnessQuantum}.

\newcommand{\KCBS}{\textup{KCBS}}

\begin{example}[Sequential KCBS] \label{ex:seq-KCBS}
    We define the sequential KCBS scenario $\XMOSEQ[\textup{KCBS}]$ where we have five instruments $X_{\KCBS} = \left\{A,B,C,D,E\right\}$, and they all have two possible outcomes $\forall I \in X_{\KCBS} \colon (O_{\KCBS})_{I} = \{0,1\}$. Then we define the sequences as:
    \begin{equation}
        \begin{split}
            \mathcal{S}_{\KCBS} = \big \{ &\{A_1,B_2\}, \{B_1,C_2\}, \\
            &\{C_1, D_2\}, \{D_1, E_2\},\\
            &\{E_1, A_2\}\big \}
        \end{split}
    \end{equation}
    Where the subscript is the order of the instrument inside a sequence.
\end{example}

Note that in Example~\ref{ex:seq-KCBS} we could have decided of different order in the sequences $S \in \mathcal{S}_{\KCBS}$, and that would still be a valid scenario. This illustrates that there are multiple possible sequential scenarios associated to a measurement scenario. We will give more details on this possibility in Section~\ref{sec:non-contextuality}.

When we run an experiment or make a model from a theory of a given sequential scenario, we obtain a set of probabilities. We call these probabilities an \textit{empirical behaviour}, which is defined on a sequential scenario:
\begin{definition}[(Sequential) Empirical behaviour] \label{def:empirical-sequential}
    An empirical behaviour $e$ on a sequential scenario $\XMOSEQ$ is defined as a family of probability distribution $e = (e_S)_{S\in \mathcal{S}}$. Each $e_S\colon O_S \to [0,1]$ is a probability distribution on classical outcomes for the instruments in a sequence.
\end{definition}

We also define the notion of compatibility of marginals, in the same spirit as~\cite{abramsky2011SheaftheoreticStructureNonlocality}:
\begin{definition}[Compatibility of marginals] \label{def:compatibility-of-marginals}
    We say that an empirical behaviour $e$ on a sequential scenario $\XMOSEQ$ respects the compatibility of marginals if for any instrument $A\in X$, that appears in two sequences $S_1, S_2 \in \mathcal{S}$ as $A_k$ in $S_1$ and as $A_{k'}$ in $S_2$, where $k, k' \in \mathbb{N}$ then:
    \begin{equation}
        \sum_{a_i \neq a_k} p(a_1, \dots, a_{\norm*{S_1}}\vert S_1) = \sum_{a_j \neq a_{k'}} p(a_1, \dots, a_{\norm*{S_2}}\vert S_2)
    .\end{equation}
    Where the outcomes $a_k$ and $a_{k'}$ correspond to the instruments $A_k$ and $A_{k'}$ respectively.
\end{definition}
Notably, the compatibility of marginals matches the notion of no-signalling in the case of non-local scenarios, such as Bell scenarios~\cite{abramsky2011SheaftheoreticStructureNonlocality}. 

\subsection{Hidden variable model} \label{sec:operation-theory}

Since this paper aims at deriving an operational way to rule out some theories, it is useful to propose a model which can reproduce the statistics of any theory. Such a model is called a Hidden Variable Model (HVM), and it can reproduce classical theory, quantum theory or any post-quantum theory.
Nevertheless, restricting a HVM may result in a specific theory, e.g., to reproduce non-contextual theories the Kochen-Specker theorem constrains a HVM to be deterministic and parameter independent~\cite{kochen1967ProblemHiddenVariables}.

In this section, we provide all the building blocks of a HVM for sequential scenarios, which we then combine\footnote{In this paper we use the terminology HVM and not ontological model to match the historical notation in the sheaf theoretic framework for contextuality, but they should be understood as conceptually the same.}.
In the following, we will refer to a hidden variable as $\lambda$ and to the set of hidden variables as $\Lambda$, i.e. $\lambda \in \Lambda$. We consider $\Lambda$ to be finite, but arbitrarily large, to simplify notation without losing generality.
In the HVM, instruments will be represented by devices which take as input a hidden variable and output a new hidden variable together with an outcome $A \colon \Lambda_A \to \Lambda_A' \times O_A$, where $\Lambda_A \subseteq \Lambda$ and $\Lambda_A' \subseteq \Lambda$. For simplicity of notation we will always write $A \colon \Lambda \to \Lambda \times O_A$ for any instrument.

An arbitrary sequence of instruments is depicted in Figure~\ref{fig:measurement-dissection}, where first a hidden variable $\lambda$ is prepared following a distribution $\mu_{\rho}\left( \lambda \right) $ given by a state preparation $\rho$. Then the hidden variable goes multiple times through instruments, producing an outcome and updating for each instrument\footnote{One might note that all these notions are special cases of channels~\cite{plavala2023GeneralProbabilisticTheories}, but we keep them separated voluntarily in order to simplify the conditions we impose on them.}. We formalize this procedure in this section by preparation functions, response functions and transfer functions respectively.

\begin{figure}[ht!]
    \centering
    \includegraphics[width=0.8\textwidth]{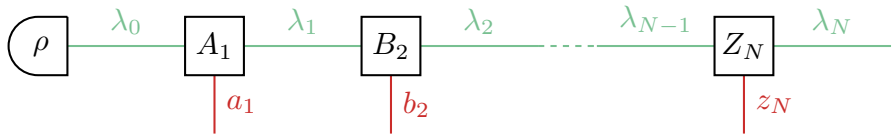}
    \caption{Example of a sequence of size $N$, where the hidden variables are represented in green, and the classical outcomes in red. First there is a state preparation $\rho$ where the hidden variable $\lambda_0$ is sampled with probability $\mu_\rho(\lambda_0)$. This hidden variable is then an input to the instrument $A_1$. It has two results: one is a classical outcome $a_1$ given with probability $\xi_{A_1}(a_1\mid \lambda_0)$, and the second a new hidden variable $\lambda_1$. This new hidden variable is sampled from the whole space $\Lambda$ with probability $\Gamma_{A_1}(\lambda_1\mid \lambda_0, a_1)$. The new hidden variable produced is forwarded to the next instrument $B_2$, and it produces a classical outcome $b_2$ as well as a hidden variable $\lambda_2$. This continues until the end of the sequence.}
    \label{fig:measurement-dissection}
\end{figure}

\subsubsection{Preparation function} \label{ssec:preparation-function}

A state preparation $\rho$ samples hidden variables with an associated probability distribution $\mu_{\rho}\colon \Lambda \to [0,1]$ which has the properties:
\begin{subequations}
	\begin{align}
	    \forall \lambda \in \Lambda\colon \mu_{\rho}(\lambda) &\geq 0 \\
	    \sum_{\lambda\in \Lambda} \mu_{\rho}(\lambda) &= 1\,.
	\end{align}
\end{subequations}
The support of this probability distribution is written $\Lambda_\rho = \{\lambda \vert \mu_\rho(\lambda) > 0\}$. We call $\mu_\rho$ the \textit{preparation function} of $\rho$.

\subsubsection{Response function} \label{ssec:response-function}

Each instrument $A$ responds to a hidden variable by outputting a classical outcome, which we denote by the lower case letter $a$, and we model it by a \textit{response function} $\xiid{A}{\cdot}{\lambda}\colon O_{A} \to [0,1]$, which is the probability to have the outcome $a$ given the hidden variable $\lambda$ when measuring $A$. These \textit{response functions} are normalized -- on the outcomes -- and positive such that for all $\lambda \in \Lambda$:
\begin{subequations}
	\begin{alignat}{3}
	    \label{eq:normalization-response-functions} &\sum_a &\xiid{A}{a}{\lambda} &= 1 \\
	    \label{eq:positive-response-functions} \forall a \in O_A\colon& &\xiid{A}{a}{\lambda} &\geq 0\,.
	\end{alignat}
\end{subequations}

Notice that each response function is independent of the order of the instrument, this information may only be accessible through the hidden variable $\lambda$. In other words we make the following assumption:
\begin{equation} 
	\forall k \in \mathbb{N}\colon \quad \xi_{A_k}(\cdot | \lambda) \equiv \xi_{A}(\cdot | \lambda)
.\end{equation} 

\subsubsection{Transfer function} \label{ssec:transfer-function}

After a given instrument $A_i$, at step $i$, the previous hidden variable $\lambda_{i-1}$ is resampled in a new hidden variable $\lambda_{i}$, possibly stochastically. The resampling of the hidden variable is determined following the function $\Gammaid{A_i}{\cdot}{\lambda_{i-1}, a_i}\colon \Lambda \to [0,1]$, which represents the probability to sample $\lambda_i$ given the previous hidden variable $\lambda_{i-1}$ and the classical outcome $a_i$. We naturally assume that the resampling is normalized, and it behaves as a probability distribution such that $\forall \lambda_{i-1} \in \Lambda$ and for all $a_i \in O_{A_i}$:
\begin{subequations}
	\begin{align}
	    \label{eq:normalized-resampling} \sum_{\lambda_i \in \Lambda} &\Gammaid{A_i}{\lambda_i}{\lambda_{i-1}, a_i} = 1 \\
	    \label{eq:resampling-positive} &\Gammaid{A_i}{\lambda_i}{\lambda_{i-1}, a_i} \geq 0\,.
	\end{align}
\end{subequations}

As one may notice, we have decided to remove the dependency in all the past instruments from the transfer function, e.g.:
\begin{equation}
	\Gamma_{A_i}(\lambda_i\vert \lambda_{i-1}, \dots, \lambda_0, a_i, \dots, a_0)
.\end{equation} 
However, we do not lose the generality, since this can be encoded in the state $\lambda_{i-1}$ as it may have arbitrarily large dimension.

\subsubsection{No-backward-in-time signalling} \label{sssec:no_backward_in_time_signalling}

The most general HVM cannot be built only from preparation, transfer and response functions, only because one could allow for signalling backward in time, which is highly counter-intuitive. Consequently, in the remaining of this paper, we make the assumption of No-Backwards-in-Time-Signalling (NBTS), similar to the one described in~\cite{guryanova2019ExploringLimitsNo}. This notion is also sometimes referred to as the Arrow of Time (AoT) constraint~\cite{budroni2019ContextualityMemoryCost,clemente2016NoFineTheorem}.

\begin{definition}[No-Backwards-in-Time-Signalling] \label{def:no-backwards-in-time-signalling}
	Any outcome or state at time $t_0$ cannot depend on any outcome or state that occurs at a later time  $t_1 > t_0$. As a consequence, any probability distribution should marginalize on future events.
\end{definition}

We have not formally introduced the notion of time in our framework, but it is implicitly assumed from the ordering of the instruments.
For instance in a sequential scenario $\XMOSEQ$, let a sequence $S \in \mathcal{S}$ of length $N$. Such that the first instrument is $A_1$ and the last is $Z_N$. Take the outcome that occurs at step $i$, then by the NBTS condition:
\begin{equation} \label{eq:NBTS}
	\begin{split}
		p&\left( c_i  \mid a_1, \ldots, z_N, \lambda_0, \ldots, \lambda_N \right) \\
	&= p\left( c_i  \mid a_1, \ldots, b_{i-1}, \lambda_0, \ldots, \lambda_{i-1} \right).
	\end{split}
\end{equation}

\subsubsection{Hidden variable model}

Lastly, we give the definition of a hidden variable model with the NBTS condition:

\begin{definition}[NBTS sequential hidden variable model] \label{def:HVM}
    A hidden variable model on a sequential scenario $\XMOSEQ$ is a tuple $\HVM$ such that:
    \begin{enumerate}[label=(\roman*)]
        \item $\Lambda$ is the set of hidden variables;
        \item $\mu \colon \Lambda \to [0,1]$ is the preparation function;
        \item for all $\lambda \in \Lambda$, $h^\lambda = (h^\lambda_S)_{S \in \mathcal{S}}$ is a family of probability distributions  $h^\lambda_S\colon O_S \to [0,1]$ such that for all $o \in O_S$:
		\begin{equation} \label{eq:NSFF-empirical-model}
		    h^\lambda_S(o) = \prod_{i=1}^{\left\lvert S \right\rvert} \xi_{A_i}(a_i \mid \lambda_{i-1}) \Gamma_{A_i}(\lambda_i \mid \lambda_{i-1}, a_i)
		.\end{equation}
		Where $o \coloneq (a_1, \ldots, a_{\left\lvert S \right\rvert})$.
    \end{enumerate}
    These give rise to an empirical model:
    \begin{equation} \label{eq:generic-HV-behaviour}
        h \coloneq \sum_{\lambda \in \Lambda}\mu(\lambda) h^\lambda
    \end{equation}
\end{definition}

In the following, we will always consider NBTS sequential hidden variable models.

\subsection{Quantum theory as a special case} \label{ssec:Quantum-example-HVM}

It is important to check that our HVM can at least recover quantum theory as a special case. 
This is possible if one assigns the set of hidden variables to the set of pure states $\Lambda = \mathcal{H}$, where $\mathcal{H}$ is a complex Hilbert space of arbitrary dimension. This means that the hidden variables are in one to one correspondence to pure states $\lambda \mapsto \ket{\lambda}$.

With this in mind, we can define the response functions and the transfer functions quantum mechanically for any instrument $A$ and outcome $a \in O_A$: 

\begin{align}
	\label{eq:HVM-to-quantum-xi} \xiid{A}{a}{\lambda_0} &\coloneq \Tr\left [ \mathcal{I}_A^a\left( \ketbra{\lambda_0}{\lambda_0} \right)  \right ] \\
    \label{eq:HVM-to-quantum-rho} \rho_{a} &\coloneq \frac{\mathcal{I}_A^{a}(\ketbra{\lambda_0}{\lambda_0})}{\xiid{A}{a}{\lambda_0}} \\
    \label{eq:HVM-to-quantum-gamma} \Gammaid{A}{\lambda_1}{\lambda_0, a} &\coloneq \Tr\left[ \rho_{a} \ketbra{\lambda_1}{\lambda_1}  \right]
\end{align}
where $\mathcal{I}_A^{a}$ here holds for the quantum instrument of $A$ given outcome $a$, where we use the definition of~\cite{heinosaari2010NondisturbingQuantumMeasurements} and~\cite[Chapter 4]{davies1976QuantumTheoryOpen}. A quantum instrument is defined as a set of Completely Positive Linear Map $\mathcal{I}_A \equiv \{\mathcal{I}_A^a\}$ and which satisfy $\mathcal{I}_A^{*a}(\mathbbm{1}) = M_{A}^a$ for every $a \in O_A$ where $\mathcal{I}_A^{*a}$ is the dual map and $M_{A}^a$ is quantum POVM element.
The above equations can be generalized to mixed states knowing that the instruments are linear operators.

\section{Operational considerations and HVM restrictions} \label{sec:operational-restrictions}

Section~\ref{ssec:Quantum-example-HVM} shows that a NBTS HVM can indeed reproduce all the effects of quantum theory. Even so, one might consider restricted HVMs, that aligns with some additional physical principles. In the case of the Bell theorem, we also impose any HVM to respect the locality condition, showing that nature is non-local. Similarly, here we can go further than NBTS, and force more constraints upon the HVM.

Of course, imposing supplementary constraints must be motivated from some considerations and here we will study operational considerations. An operational notion is a notion that we can observe directly, something tied to experimental results, and as theory independent as possible\footnote{Note that any operational theory is still necessarily theory-laden due to the Duhem-Quine thesis~\cite{spekkens2025QuantumFoundations}.}.
An example of an operational approach is the one Spekkens proposed for his notion of contextuality~\cite{spekkens2005ContextualityPreparationsTransformations} where two preparations are equivalent if no measurement can distinguish them. This means that any experiment with a tomographically complete set of measurements should observe no difference in the statistics of these two equivalent preparations.
This operational fact is then reflected on the HVM, by restricting the preparation functions to be equal for all hidden variable, therein called \textit{preparation non-contextuality}.

In this section, we introduce three HVM restrictions: outcome determinism, no-disturbance and outcome independence. We will also relate them to experimental considerations, to show how they can be operationally defined.
Together, some of these restrictions are related to non-contextuality, as detailed in Section~\ref{sec:non-contextuality}.

The behaviour of an instrument $A$ in a HVM will be described by its response function $\xi_{A}$ and its transfer function $\Gamma_{A}$ as described by Section~\ref{sec:operation-theory}.

\subsection{Outcome determinism}

The notion of outcome determinism for sequential scenarios is directly derived from the one used in the Kochen-Specker theorem or in Fine's theorem.

\begin{definition}[Outcome determinism] \label{def:outcome-determinism}
    In a HVM, an instrument $A$ is called outcome deterministic if the probabilities associated with its classical outcome are always $0$ or $1$. More formally, for any hidden variable  $\lambda \in \Lambda$, we say that $A$ is outcome deterministic if:
    \begin{equation}
	    \xi_{A}(a \mid \lambda) \in \{0,1\}
    .\end{equation}
\end{definition}

In order to justify determinism, we can consider projectors, or \textit{sharp} instruments~\cite{chiribella2014MeasurementSharpnessCuts,kochen1967ProblemHiddenVariables}, preparation non-contextuality as illustrated in~\cite{spekkens2005ContextualityPreparationsTransformations}, or believe that the theory at play is maximally $\psi$-epistemic~\cite{leifer2013MaximallyEpistemicInterpretations}.

\subsection{No-disturbance}

An instrument $A$ does not disturb another instrument $B$ if placing $A$ right before $B$ does not change the probability of outcomes of $B$. As an example, a fair coin flip is never disturbed by any previous instrument, because it will always outcome either head or tails with half probability. Similarly, a non-invasive instrument will never disturb any following instrument, because since the system is not disturbed the following instrument must act on the undisturbed system. Any repeatable outcome deterministic instrument is also self non-disturbing. We analyse these examples after introducing the formal definition of no-disturbance for a HVM:
\begin{definition}[No-disturbance] \label{def:no-disturbance}
    In a HVM, an instrument $A$ does not disturb another instrument $B$, which we write $A \nd B$, if the following holds:
    \begin{equation}
    \label{eq:no-disturbance}
    \begin{split}
        &\forall \lambda \in \Lambda\colon \forall b \in O_{B}\colon\\
        &\sum_{a, \lambda'} \xiid{A}{a}{\lambda} \Gammaid{A}{\lambda'}{\lambda, a}\xiid{B}{b}{\lambda'} = \xiid{B}{b}{\lambda}
    \end{split}
    \end{equation}
    Meaning that the outcome statistics remain unchanged for $B$ after $A$ has been measured.
\end{definition}
Definition~\ref{def:no-disturbance} is equivalent to Definition~\ref{def:instrument_no-disturbance} but reformulated for HVMs. Furthermore, it should be noted that Definition~\ref{def:no-disturbance} is not necessarily symmetric, $B$ might not be disturbed by $A$, and yet $A$ can be disturbed by $B$.

Verifying Equation~\eqref{eq:no-disturbance} requires one to be able to prepare every hidden variable $\lambda \in \Lambda$, which operationally means that one should have a tomographically complete set of preparations. We point out that having a tomographically complete set of preparations is commonly discussed in operational contextuality~\cite{mazurek2016ExperimentalTestNoncontextuality,pusey2019ContextualityAccessTomographically,gitton2022SolvableCriterionContextuality} and in the more general settings of GPTs~\cite{schmid2021CharacterizationNoncontextualityFramework}.

\subsubsection{Example: Fair coin flip}

The fair coin flip instrument $A_{\textup{fcf}}$ is never disturbed by any previous instrument. In order to prove it let us use Definition~\ref{def:no-disturbance}, where we have the fair coin flip response function to be $\xiid{A_{\textup{fcf}}}{0}{\lambda} = \xiid{A_{\textup{fcf}}}{1}{\lambda} = 0.5$ for all hidden variable $\lambda \in \Lambda$. Then:
\begin{align*}
    &\sum_{a, \lambda'} \xiid{A}{a}{\lambda} \Gammaid{A}{\lambda'}{\lambda, a}\xiid{A_{\textup{fcf}}}{a_{\textup{fcf}}}{\lambda'} \\
    &\quad = \frac{1}{2} \sum_{a, \lambda'} \xiid{A}{a}{\lambda} \Gammaid{A}{\lambda'}{\lambda, a} \\
    &\quad = \frac{1}{2}
\end{align*}
where this holds for any instrument $A$ and any$\lambda\in\Lambda$. Thus, the fair coin flip is never disturbed by any previous instrument, however it might disturb further instruments, depending on the specific state transformation that it implements.

\subsubsection{Example: Non-invasive instrument}

A non-invasive instrument $A_{\textup{NI}}$, is a device that does not change the hidden variable, i.e., $\Gamma_{A_{\textup{NI}}}(\lambda'\vert\lambda) = \delta_{\lambda,\lambda'}$. In that case, for any following instrument $A$ Definition~\ref{def:no-disturbance} is satisfied:
\begin{align*}
    &\sum_{a_{\textup{NI}}, \lambda'} \xiid{A_{\textup{NI}}}{a_{\textup{NI}}}{\lambda} \Gammaid{A_{\textup{NI}}}{\lambda'}{\lambda, a}\xiid{A}{a}{\lambda'} \\
    &\quad = \sum_{a_{\textup{NI}}, \lambda'} \xiid{A_{\textup{NI}}}{a_{\textup{NI}}}{\lambda} \delta_{\lambda,\lambda'}\xiid{A}{a}{\lambda'} \\
    &\quad = \xiid{A}{a}{\lambda}
\end{align*}

Thus, it is quite clear that a non-invasive instrument is always non-disturbing according to our definition.

\subsubsection{Example: Repeatable instrument}

\newcommand{\Arep}{A_{\textup{r}}}
\newcommand{\arep}{a_{\textup{r}}}

We define a repeatable instrument $\Arep$ as a device which, once put in sequence, always output the same classical outcome. Mathematically it means that for any hidden variable $\lambda_0 \in \Lambda$ and any classical outcomes $\arep, \arep' \in O_{A_{\textup{rep}}}$ we must have the following:
\begin{align}
    \label{eq:repeatable-measurement} \sum_{\lambda_1} &\xi_{\Arep}(\arep \mid \lambda_0) \Gamma_{\Arep}(\lambda_1 \mid \lambda_0, \arep) \xi_{\Arep}(\arep' \mid \lambda_1) = \xi_{\Arep}(\arep \mid \lambda_0) \delta_{\arep,\arep'}
\end{align}

Such an instrument is in fact self-non-disturbing $\Arep \nd \Arep$. We can show it by summing on both sides of Equation~\eqref{eq:repeatable-measurement}:

\begin{subequations}
	\begin{alignat}{2}
		&\sum_{\lambda_1} \xi_{\Arep}(\arep \mid \lambda_0) \Gamma_{\Arep}(\lambda_1 \mid \lambda_0, \arep) \xi_{\Arep}(\arep' \mid \lambda_1) &&= \xi_{\Arep}(\arep \mid \lambda_0) \delta_{\arep,\arep'} \\
				\implies& \sum_{\lambda_1, \arep} \xi_{\Arep}(\arep \mid \lambda_0) \Gamma_{\Arep}(\lambda_1 \mid \lambda_0, \arep) \xi_{\Arep}(\arep' \mid \lambda_1) &&= \sum_{\arep} \xi_{\Arep}(\arep \mid \lambda_0) \delta_{\arep,\arep'} \\
					& &&= \xi_{\Arep}(\arep' \mid \lambda_0)
	\end{alignat}
\end{subequations}

\subsubsection{No-disturbance in quantum theory}

The notion of no-disturbance is defined for POVM measurements in quantum theory in~\cite{heinosaari2010NondisturbingQuantumMeasurements}. We provide an adapted definition for instruments below: 
\begin{definition}[(Quantum) Non-disturbing instruments from~\cite{heinosaari2010NondisturbingQuantumMeasurements}] \label{def:quantum_non-disturbing_measurements_from--cite-heinosaari2010nondisturbing}
	Given two instruments $A$ and $B$, $A$ can be performed without disturbing $B$, if:
	\begin{equation} \label{eq:no_disturbance_quantum_theory}
		\begin{split}
		    &\forall \rho \in \mathcal{T}\left( \mathcal{H} \right)  \colon \forall b \in O_{B}\colon\\
		    &\quad \Tr \left [ \mathcal{I}_{B}^{b}\left( \sum_{a} \mathcal{I}_A^{a}(\rho) \right) \right ] = \Tr\left [ \mathcal{I}_{B}^{b}\left( \rho \right)  \right ]
		\end{split}
	\end{equation}
	Where $\mathcal{H}$ is a Hilbert space of arbitrary dimension and $\mathcal{T}\left( \mathcal{H} \right)$ is the set of positive trace class operators of trace one.
\end{definition}
We observe that HVMs adapted to quantum theory as shown by Section~\ref{ssec:Quantum-example-HVM} (Equations~\eqref{eq:HVM-to-quantum-xi}--\eqref{eq:HVM-to-quantum-gamma}) in fact retrieve Equation~\eqref{eq:no_disturbance_quantum_theory} when applied to our definition of no-disturbance (see Definition~\ref{def:no-disturbance}).

\subsection{Outcome independence}

We also introduce another restriction called \textit{outcome independence} (OI).
An instrument $A_i$ in a sequence $S$ is outcome independent if its transfer function $\Gamma_{A_i}$ only depends on the previous hidden variable $\lambda_{i-1}$ and not on the outcome $a_{i}$. Namely, the new hidden variable $\lambda_i$ after the instrument must be independent of the outcome $a_i$.

A typical OI instrument samples randomly after the classical outcome production, or the opposite, it deterministically produces the same hidden variable after the classical outcome. We will go more in depth with these examples, but first we provide the definition of outcome independence:

\begin{definition}[(Transfer functions) Outcome independence] \label{def:outcome-independence}
	In a HVM, an instrument $A$ is outcome independent if its transfer function $\Gamma_{A}$ does not depend on the outcome $a$ of the instrument. In other words for all in and out hidden variables $\lambda, \lambda' \in \Lambda$ and for any pair of outcomes $a, a' \in O_{A}$:
    \begin{equation}
            \Gammaid{A}{\lambda'}{\lambda, a} = \Gammaid{A}{\lambda'}{\lambda, a'}
    .\end{equation}
    For simplicity, we will drop the dependency in the outcome when referring to a transfer function that is outcome independent, i.e., $\Gamma_{A}(\lambda' \mid \lambda) \equiv \Gammaid{A}{\lambda'}{\lambda, a}$.
\end{definition}

We would like to highlight the fact that outcome independence is difficult to operationally motivate, particularly since quantum mechanics does have outcome dependence. It is not a requirement to show contextuality, however one can retrieve a notion of contextuality from this assumption, which we make explicit in Theorem~\ref{thm:ND-and-OI-are-NC}.  

\subsubsection{Example : Random resampling}

\newcommand{\rr}{\textup{rr}}

An instrument $A_{\textup{rr}}$ is randomly resampling if given a set of hidden variables $\Lambda$ we have:
\begin{equation}
    \Gamma_{A_\rr}(\lambda'\vert \lambda, a_\rr) = \frac{1}{\norm*{\Lambda}}
\end{equation}

Clearly, for such an instrument, the transfer function does not depend on the classical outcome $a_\rr$.

\subsubsection{Example : Deterministic reset}

\newcommand{\dr}{\textup{dr}}

Similarly to the random resampling, a deterministic reset instrument $A_\dr$, is an instrument that resets the hidden variable $\lambda$ after producing the classical outcome $a_\dr$:
\begin{equation}
    \Gamma_{A_\dr}(\lambda'\vert \lambda, a_\dr) = \delta_{\lambda_\dr, \lambda'}
\end{equation}
where $\lambda_\dr$ is a given hidden variable. Thus, the instrument simply always outputs the same hidden variable, whatever the input, and again, this does not depend on the classical outcome.

\subsubsection{Programmable instruments and OI}

The notion of outcome independence has already appeared in the work of Buscemi et al.~\cite{buscemi2023UnifyingDifferentNotions}, where they introduced a specific type of instrument which has a classical and a quantum input, as well as a classical and a quantum output, altogether called a programmable instruments. The particularity of these instruments is that the production of the classical output is detached from the production of the quantum output, without the possibility to communicate.
It is interesting to notice that some type of compatibility, termed \enquote{classical compatibility}~\cite[Def. 2]{buscemi2023UnifyingDifferentNotions}, is inherently OI.

\section{Non-contextuality for sequential scenarios} \label{sec:non-contextuality}

We now show how the restrictions introduced in Section~\ref{sec:operational-restrictions} relate to non-contextuality, specifically as defined in the sheaf theoretic framework. We start with the definition of non-contextuality, and we show that non-contextual empirical models form a polytope. Not only that, but we also expose the contextual fraction and how it relates to sequential scenarios. First, we define the unique set of instruments in a sequence $S$ as:
\begin{equation} \label{eq:unique_set_of_instruments}
\bar{S} = \left\{A \middle| \exists i \in \mathbb{N}, A_i \in S\right\}
.\end{equation} 
This corresponds to the set of all instruments in a sequence where we removed the indices, and by definition $\bar{S} \in X$ for any $\XMOSEQ$. As an example, if $S = \{A_1, B_2, A_3, C_4, B_5\}$ then we have $\bar{S} = \{A,B,C\}$. We also define a consistent outcome:
\begin{definition}[Consistent outcome] \label{def:consistent-outcome}
	Let a sequential scenario $\XMOSEQ$ and a sequence $S \in \mathcal{S}$. An outcome of that sequence $o_S \in O_S$ is consistent if for every instrument $A_i$ in the sequence $S$ their outcomes are the same independently of their positions in the sequence. 
	More precisely, $o \in O_S$ is consistent if there exists $o_{\bar{S}}\in O_{\bar{S}}$ such that for all $A \in \bar{S}$ and for all $i \in \mathbb{N}$:
	\begin{equation}\label{eq:consistent-outcome}
		    \quad A_i \in S\implies \restr{o}{A_i} = \restr{o_{\bar{S}}}{A}
	.\end{equation}
\end{definition}
For instance, take the sequence $S = \{A_1, B_2, A_3, C_4, B_5\}$, and the outcomes  $o_{S} = (1,0,1,1,0)$ is consistent because $\restr{o_{S}}{A_1} = \restr{o_{S}}{A_3} = 1$ and $\restr{o_{S}}{B_1} = \restr{o_{S}}{B_5} = 0$. On the opposite for the same sequence the outcome $o_{S} = (1,1,0,0,1)$ is not consistent.

We directly show that the outcome $o_{\bar{S}} \in O_{\bar{S}}$ associated to $o_S \in O_S$ is unique:
\begin{lemma}[Uniqueness of $o_{\bar{S}}$]
    For a given sequence $S$, if $o_S \in O_{S}$ is a consistent outcome then there is a unique $o_{\bar{S}} \in O_{\bar{S}}$ associated with it.
\end{lemma}
\begin{proof}
	We prove it by contradiction. Let $o_S \in O_S$ be a consistent outcome for a sequence $S$. Let $o_{\bar{S}}, o'_{\bar{S}} \in O_{\bar{S}}$ such that $o_{\bar{S}} \neq o'_{\bar{S}}$ and they both satisfy Equation~\eqref{eq:consistent-outcome}. Necessarily we have for all $A \in \bar{S}$ and for all $i \in \mathbb{N}$:
	\begin{alignat}{3}
		A_i \in S &\implies \left\{\begin{matrix}
				\restr{o_S}{A_i} &= \restr{o_{\bar{S}}}{A} \\
				\restr{o_S}{A_i} &= \restr{o'_{\bar{S}}}{A}
\end{matrix}\right. \\
			  &\implies \restr{o_{\bar{S}}}{A} = \restr{o'_{\bar{S}}}{A}
	\end{alignat}
    Thus they must be the same, which contradicts $o_{\bar{S}} \neq o'_{\bar{S}}$.
\end{proof}

We now define non-contextuality for sequential scenarios, following closely the definitions in~\cite{abramsky2011SheaftheoreticStructureNonlocality}, but where we allow for repeating instruments within a sequence.
\begin{definition}[Non-contextuality for sequential scenarios] \label{def:non-contextuality-sequential}
    We say that an empirical model $e$ on a sequential scenario $\XMOSEQ$ is non-contextual, if there exist a global probability distribution on outcomes $d\colon O_X \to [0,1]$, such that any marginal in the empirical model can be retrieved as a marginal of the global probability distribution $d$ if the outcomes are consistent. Then for all $S \in \mathcal{S}$ and for all $o_S \in O_S$:
    \begin{equation} \label{eq:non-contextuality-sequential}
        e_S\left( o_S \right)  =  \left\{\begin{matrix*}[l]
 \restr{d}{\bar{S}}(o_{\bar{S}}),\, &\textup{if } o_S \textup{ is consistent}\\
 0,\,  &\textup{otherwise}
\end{matrix*}\right.
    \end{equation}
\end{definition}

\noindent \textit{Remark:} For any sequence $S \in \mathcal{S}$ if there exists another sequence $S' \in \mathcal{S}$ such that $\bar{S} = \bar{S'}$ then they must have the same probability distribution over consistent outcomes. This can be directly inferred from Equation~\eqref{eq:non-contextuality-sequential}.

\subsection{Convexity of non-contextual empirical models} \label{ssec:convexity-nc-e}

It is natural to wonder whether the set of non-contextual behaviour is convex, and whether the convex set can be defined as a polytope or not. We prove that it is in fact a convex set -- which means that a mixture of non-contextual empirical models must yield a non-contextual empirical model. But, it is also a polytope, with extremal points, and we argue that the extremal points of the non-contextual polytope for sequential scenarios are given by the global deterministic assignments on $O_X$.

\begin{theorem} \label{thm:thm-1}
    The set of non-contextual behaviours form a polytope, where the extremal points are given by global deterministic assignments.
\end{theorem}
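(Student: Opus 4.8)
The plan is to realize the non-contextual set as the image of the simplex of global distributions under an explicit linear map, and then read off both claims from elementary convex geometry. First I would observe that Definition~\ref{def:non-contextuality-sequential} makes each component $e_S$ a fixed linear function of the global distribution $d \in \Delta(O_X)$, where $O_X = \prod_{x\in X} O_x$: passing from $d$ to its marginal on $O_{\bar{S}}$ and reindexing the consistent outcomes is linear in $d$, while the value $0$ forced on inconsistent outcomes does not depend on $d$. Collecting these components defines a linear map $\Phi\colon \Delta(O_X) \to \prod_{S\in\mathcal{S}} \mathbb{R}^{O_S}$ whose image is, by that definition, exactly the set of non-contextual behaviours. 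A quick check shows $\Phi$ is well defined, i.e. $\Phi(d)$ is always a genuine empirical behaviour: a marginal of a probability distribution is again a probability distribution, and for each $S$ the consistency constraint singles out exactly one consistent outcome per value of $\restr{d}{\bar{S}}$, so every $e_S$ is nonnegative and normalized.

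Next, $\Delta(O_X)$ is itself a polytope whose vertices are the point masses $\delta_g$ for $g\in O_X$. Writing $d = \sum_{g} d(g)\,\delta_g$ and using linearity gives $\Phi(d) = \sum_{g} d(g)\,\Phi(\delta_g)$, so the non-contextual set is precisely the convex hull of the finite family $\{\Phi(\delta_g) : g \in O_X\}$ and is therefore a polytope; this establishes the first claim, with each $\Phi(\delta_g)$ being the empirical model induced by the global deterministic assignment $g$.

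Finally, for the extreme points I would argue that each $\Phi(\delta_g)$ is $\{0,1\}$-valued: the marginal of $\delta_g$ on $O_{\bar{S}}$ is the point mass at $\restr{g}{\bar{S}}$, so $e_S$ places weight $1$ on the single consistent outcome agreeing with $g$ and weight $0$ everywhere else. Any $\{0,1\}$-valued empirical behaviour is extremal already in the ambient set of behaviours: if such an $e$ equals $\lambda e_1 + (1-\lambda)e_2$ with $0<\lambda<1$, nonnegativity forces $e_1,e_2$ to vanish wherever $e$ vanishes, and normalization then forces $e_1=e_2=e$. Hence every $\Phi(\delta_g)$ is extremal in the non-contextual polytope, and since the vertices of the convex hull of finitely many points are a subset of those points, the extreme points are exactly the deterministic global assignments.

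I expect the crux to be this last, extremality, direction rather than the polytope statement: the image of a vertex under a linear map need not remain a vertex, so one cannot merely transport the vertices of $\Delta(O_X)$ through $\Phi$, and it is the $\{0,1\}$-valued argument that genuinely pins the extreme points down. The only other point requiring care is a harmless degeneracy — distinct assignments $g$ differing only on instruments that appear in no sequence induce the same behaviour and are thereby identified — which affects neither the polytope nor its vertex set.
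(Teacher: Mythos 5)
Your proof is correct and follows essentially the same route as the paper's: both decompose the global distribution $d$ as a convex mixture of deterministic assignments and use linearity of marginalization to conclude that the non-contextual set is the convex hull of the finitely many deterministic empirical models. The one place you go beyond the paper is the extremality step — the paper simply asserts the deterministic models are the vertices, whereas your $\{0,1\}$-valued argument (nonnegativity plus normalization forcing $e_1=e_2=e$) actually proves it, which is a worthwhile addition since, as you note, vertices need not survive a linear map.
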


\begin{proof}
    By definition, we say that en empirical model $e$ on a sequential scenario $\XMOSEQ$ is non-contextual if there exist a global distribution on the outcomes $d\colon O_X \to [0,1]$ and it respects Equation~\ref{eq:non-contextuality-sequential}.
    First we can see that such a global probability distribution can be seen as a mixture of deterministic global assignments $\{ d^{\text{det}_i}\}_i$:
    \begin{equation}
        d = \sum_i \mu_i d^{\text{det}_i}
    \end{equation}
    where $d^{\text{det}_i} \colon O_X \to \set{0,1}$ is a deterministic assignment of outcomes and $\set{\mu_i}_i$ gives a probability distribution over these deterministic assignments.
    Each of these $d^{\text{det}_i}$ give rise to a deterministic empirical model given by, for all $S \in \mathcal{S}$:
    \begin{equation}
        e_S^{\text{det}_i}(o) =  \left\{\begin{matrix*}[l]
 \restr{d^{\text{det}_i}}{\bar{S}}(o_{\bar{S}}),\, &\textup{if } o \textup{ is consistent}\\
 0,\,  &\textup{otherwise}
\end{matrix*}\right.
    \end{equation}
    Then we can rewrite Equation~\eqref{eq:non-contextuality-sequential} as the following:
    \begin{equation}
        e_S(o) =  \left\{\begin{matrix*}[l]
 \sum_i \mu_i \restr{d^{\text{det}_i}}{\bar{S}}(o_{\bar{S}}),\, &\textup{if } o \textup{ is consistent}\\
 0,\,  &\textup{otherwise}
\end{matrix*}\right.
    \end{equation}
    More concisely, this may be written as:
    \begin{equation}
        e_S(o) = \sum_i\mu_i e_S^{\text{det}_i}(o) 
    \end{equation}
    Thus we have just proven that any non-contextual model can be written as a convex mixture of deterministic non-contextual empirical model, which are the vertices of the polytope.
\end{proof}

Because sequential scenarios allow for repeated instruments inside a context, we can wonder whether it impacts the shape of the polytope to add a repeated instrument at the end of a context. For instance, let us consider the following extended sequential KCBS scenario:
\begin{example}[Extended sequential KCBS] \label{ex:extended-KCBS}
    Let the sequential KCBS scenario as per Example~\ref{ex:seq-KCBS}, where we take a different sequence $\mathcal{S}$:
    \begin{equation}
        \begin{split}
            \mathcal{S}_{\text{E-}\KCBS} = \big \{ &\{A_1, B_2, A_3\}, \{B_1, C_2\}, \\
            &\{C_1, D_2\}, \{D_1, E_2\},\\
            &\{E_1, A_2\}\big \}
        \end{split}
    \end{equation}
    where we have simply added $A_3$ at the end of the first sequence.
\end{example}

By construction, the spaces of empirical models, or correlations, of Example~\ref{ex:seq-KCBS} and Example~\ref{ex:extended-KCBS} are different (since more outcomes are involved in Example~\ref{ex:extended-KCBS}). However, since the set of instruments is the same, by Theorem \ref{thm:thm-1} the non-contextual polytopes have the same number of extremal points, that are given by the fixed assignments for each instrument. 

\subsection{Contextual fraction for sequential scenarios}

Once we have an empirical model $e$ on a given sequential scenario $\XMOSEQ$, we can require not only to have a qualitative measure of contextuality but rather a quantitative measure of contextuality, i.e., how contextual is a given empirical model.
In the standard sheaf theoretic contextuality scenario, contextuality can be quantified with the contextual fraction~\cite{abramsky2011SheaftheoreticStructureNonlocality}, and this can be implemented as a linear program~\cite{abramsky2017ContextualFractionMeasure}. 
Thus, we naturally extend the definition of the contextual fraction for sequential scenarios in this section.

In order to define the contextual fraction, we first start by decomposing any sequential empirical model as a convex sum of empirical models:
\begin{equation}
    e = \lambda e^{\text{NC}} + (1-\lambda)e'
\end{equation}
where $e^{\text{NC}}$ is a non-contextual model as per Definition~\ref{def:non-contextuality-sequential} and $e'$ is any empirical model. If one maximizes the weight on $\lambda$, which takes the optimal value $\lambda^*$, then we can define the non-contextual fraction of $e$ as $\NCF(e) \coloneq \lambda^*$, and by definition the contextual fraction is defined as $\CF(e) \coloneq 1 - \NCF(e)$. The contextual fraction was originally designed for empirical models on measurement scenarios (See Section~\ref{sec:meas-scen-to-seq}), but we see from the definition that it naturally extends to empirical models on sequential scenarios. It is a convex measure~\cite{abramsky2017ContextualFractionMeasure} and it is continuous~\cite{vallee2024CorrectedBellNoncontextuality}.

In section~\ref{ssec:convexity-nc-e} we questioned whether additions of repeated instruments changes the NC polytope. In fact, adding instruments, changes the whole scenario, and the question is whether the new scenario is different from the non-repeated scenario. A good question, is whether we have a new difference between quantum and non-contextual. To be more specific, an instance of this problem is the sequential KCBS given in example~\ref{ex:seq-KCBS}, then in using standard quantum theory, can we achieve a higher contextual fraction simply by repeating some instruments ? We phrase this problem as an open question:

\begin{openquestion}
	In the general case, it may be the case that repeating instruments inside a sequential scenario give more or less contextuality. Answering this question would lead to a better understanding of sequential scenarios, and it is left open for future work.
\end{openquestion}

\section{Restricted HVMs and non-contextuality} \label{sec:restriced_HVMs_and_non-contextuality}

In the previous section we have introduced the definition of contextuality, we now show how to derive this notion of contextuality from restricted HVMs.

First we define what is a non-disturbing hidden variable model (ND HVM):
\begin{definition}[ND HVM] \label{def:ND-HVM}
    A HVM $\HVM$ on a sequential scenario $\XMOSEQ$ is non-disturbing, if for any sequence $S \in \mathcal{S}$ and any two instruments in that sequence $A_i \neq B_j \in S$ we have:
    \begin{equation}
        j > i \implies A_i \nd B_j\,.
    \end{equation}
    Where the notation $A_i \nd B_j$ means that $A_i$ does not disturb $B_j$. We will refer to any behaviour of such a hidden variable model as an ND HVM model.
\end{definition}
In plain English, it means that for all sequences within a ND HVM, any instrument in a sequence is such that it will not disturb any other instrument later in that sequence.

Similarly, we refer to an outcome deterministic HVM $\HVM$ on a sequential scenario $\XMOSEQ$, whenever every instrument is outcome deterministic in the sense of Definition~\ref{def:outcome-determinism}:
\begin{definition}[OD HVM]
        A HVM $\HVM$ on a sequential scenario $\XMOSEQ$ is outcome deterministic if for any sequence $S \in \mathcal{S}$, all instruments in such a sequence $A_i \in S$ are OD as in Definition~\ref{def:outcome-determinism}. 
\end{definition}

Finally, we talk of an outcome independent HVM:
\begin{definition}[OI HVM]
    A HVM $\HVM$ on a sequential scenario $\XMOSEQ$ is outcome independent if for any sequence $S \in \mathcal{S}$, all instruments in such a sequence $A_i \in S$ are OI as in Definition~\ref{def:outcome-independence}.
\end{definition}

\subsection{No-disturbance and the compatibility of marginals} \label{ssec:no-disturbance-and-com}

In the sheaf theoretic framework for contextuality, if there exists a HVM which respects both the compatibility of marginals (generalization of no-signalling) and outcome determinism then it is  equivalent to non-contextuality. 
In this paper we have been looking at no-disturbance and NBTS, together they imply the compatibility of marginals. More formally:
\begin{theorem}[ND implies the compatibility of marginals] \label{thm:ND-equiv-COM}
	Let a ND HVM behaviour $h$ on a sequential scenario $\XMOSEQ$, if $h$ respects the NBTS condition then it also respects the compatibility of marginals. 
\end{theorem}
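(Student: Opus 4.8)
The plan is to compute directly, from the HVM expression in Equation~\eqref{eq:NSFF-general-empirical-model}, the single-instrument marginal of the behaviour $h$ on an arbitrary instrument $A$ sitting at position $k$ inside a sequence $S = \{A_1, \dots, A_{\norm*{S}}\}$, and to show that it reduces to $\sum_{\lambda_0}\mu(\lambda_0)\xiid{A}{a_k}{\lambda_0}$, an expression depending only on the instrument $A$ and the preparation, not on $S$ nor on the position $k$. Compatibility of marginals (Definition~\ref{def:compatibility-of-marginals}) then follows immediately: any two occurrences of $A$ — at position $k$ in $S_1$ and position $k'$ in $S_2$ — yield the very same value, which is exactly the required equality.

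First I would marginalize over all outcomes occurring after position $k$. Because the response and transfer functions are normalized (Equations~\eqref{eq:normalization-response-functions} and~\eqref{eq:normalized-resampling}), summing over $a_{\norm*{S}}$ and $\lambda_{\norm*{S}}$ collapses the last factor to $1$, and iterating this from the end of the sequence down to position $k+1$ eliminates every instrument after $A_k$ using only normalization — this is precisely the no-signalling-from-the-future content of NBTS. Summing finally over $\lambda_k$ leaves
\begin{equation*}
    \sum_{\lambda_0,\dots,\lambda_{k-1}} \mu(\lambda_0)\, \xiid{A_k}{a_k}{\lambda_{k-1}} \prod_{i=1}^{k-1} \xiid{A_i}{a_i}{\lambda_{i-1}} \Gammaid{A_i}{\lambda_i}{\lambda_{i-1}, a_i}\,.
\end{equation*}

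The heart of the argument, and the step I expect to be the main obstacle, is marginalizing over the outcomes before position $k$. Here I would peel off instruments one at a time, starting from $A_{k-1}$ and working inward, using the no-disturbance relations $A_i \nd A_k$ which hold for every $i < k$ by Definition~\ref{def:ND-HVM}. Concretely, once the future has been removed the surviving response function $\xiid{A_k}{a_k}{\lambda_{k-1}}$ is conditioned on the output state $\lambda_{k-1}$ of $A_{k-1}$, so summing over $a_{k-1}$ and $\lambda_{k-1}$ and invoking $A_{k-1}\nd A_k$ (Equation~\eqref{eq:no-disturbance} with $A \mapsto A_{k-1}$, $A' \mapsto A_k$) replaces this whole block by $\xiid{A_k}{a_k}{\lambda_{k-2}}$. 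The subtlety to verify is that the hypothesis applies verbatim at each stage: after every collapse the remaining response function of $A_k$ is conditioned directly on the output state of the next instrument to be removed, so the no-disturbance identity can be reapplied with $A' \mapsto A_k$ and the decremented index. An induction on the number of removed instruments then collapses the entire prefix down to $\sum_{\lambda_0}\mu(\lambda_0)\xiid{A_k}{a_k}{\lambda_0}$.

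Finally I would invoke the position-independence of the response function, $\xi_{A_k}(\cdot|\lambda)\equiv\xi_A(\cdot|\lambda)$, stated in Section~\ref{ssec:response-function}, to rewrite the outcome as $\sum_{\lambda_0}\mu(\lambda_0)\xiid{A}{a_k}{\lambda_0}$. Since this value references neither the sequence $S$, nor its length, nor the index $k$, the identical computation carried out for any second sequence $S_2$ containing $A$ produces the same marginal, establishing the equality demanded by Definition~\ref{def:compatibility-of-marginals} and hence the theorem.
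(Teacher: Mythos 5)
Your proposal is correct and follows essentially the same route as the paper's proof: eliminate all instruments after position $k$ by normalization (the NBTS step), then iteratively peel off the prefix instruments using the no-disturbance identity $A_i \nd A_k$, collapsing the marginal to an expression depending only on $A$ and the preparation. The only cosmetic differences are that the paper carries out the computation conditioned on a fixed $\lambda_0$ (obtaining $\xi_{A_{\tilde k}}(a_{\tilde k}\mid\lambda_0)$ before averaging) and leaves the position-independence of the response function implicit, whereas you average over $\mu(\lambda_0)$ throughout and invoke $\xi_{A_k}\equiv\xi_A$ explicitly.
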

For the full proof see Appendix~\ref{proof:ND-equiv-COM}, the general idea is to use ND as one way no signalling, and causality as the other way no-signalling. With these assumptions, one ends up by respecting the compatibility of marginals.

Theorem~\ref{thm:ND-equiv-COM} has equivalently been phrased in~\cite{clemente2016NoFineTheorem}, with the notion of no-signalling in time together with the arrow of time. Nevertheless, we proved it again to show it in the formalism introduced in this paper. Finally, we want to emphasize that the compatibility of marginals and no-disturbance are not on an equal footing, no-disturbance is a weaker notion, and furthermore no-disturbance refers both to the mathematical property and the physical property, while the compatibility of marginals is a property that arise from various physical properties such as no-signalling or no-disturbance combined with NBTS as shown by Theorem~\ref{thm:ND-equiv-COM}.

\subsection{No-disturbance with outcome determinism is equivalent to non-contextuality for sequential scenarios}

After Theorem~\ref{thm:ND-equiv-COM} shows us that any ND HVM on a sequential scenario implies the compatibility of marginals we now turn to OD ND HVM, and we show that these hidden variables are non-contextual.
\begin{theorem}[ND and OD HVM are equivalent to non-contextuality] \label{thm:ND-and-OD-are-NC}
	Let an empirical model $e$ on a sequential scenario $\XMOSEQ$. This empirical model is non-contextual if and only if there exists a ND OD HVM $\HVM$ respecting the NBTS condition which realizes it. 
\end{theorem}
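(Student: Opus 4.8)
The plan is to prove the two implications separately, with the forward direction (non-contextual $\Rightarrow$ ND OD HVM) being routine and the converse carrying the real content.

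For the easy direction I would realize the global distribution $d$ from Definition~\ref{def:non-contextuality-sequential} by the HVM whose hidden variables \emph{are} the global deterministic assignments: set $\Lambda = O_X$, take the preparation function to be $\mu = d$, define the response functions deterministically by $\xi_A(a\vert g) = \delta_{a,g(A)}$ (reading off the outcome that the assignment $g$ gives to $A$), and take the non-invasive transfer functions $\Gamma_A(g'\vert g,a) = \delta_{g,g'}$. Outcome determinism is immediate, and the non-invasive example of Section~\ref{sec:operational-restrictions} already shows that identity transfer functions are non-disturbing, so this HVM is both ND (Definition~\ref{def:ND-HVM}) and OD. Since the hidden variable never changes, substituting into \eqref{eq:NSFF-general-empirical-model} and telescoping the normalized transfer functions leaves $h_S(o) = \sum_g \mu(g)\prod_i \delta_{o_{A_i},g(A_i)}$; the product vanishes unless $o$ is consistent (a repeated instrument would need two different values of $g$), and for consistent $o$ it collapses to $\sum_{g:\,\restr{g}{\bar S} = o_{\bar S}} d(g) = \restr{d}{\bar S}(o_{\bar S})$, which is exactly \eqref{eq:non-contextuality-sequential}.

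For the converse I would extract a global distribution from the HVM. By outcome determinism each response function selects a unique outcome, so I write $f_A(\lambda)$ for the outcome with $\xi_A(f_A(\lambda)\vert\lambda)=1$, and (since response functions are position independent) set $d(g) = \sum_{\lambda:\, \forall A,\, f_A(\lambda)=g(A)} \mu(\lambda)$. The crux is to show that, along any sequence, the outcome produced by the instrument sitting at position $j$ equals $f_{A_j}(\lambda_0)$, i.e. the outcome it would have given directly on the initial hidden variable, even though it actually acts on the evolved variable $\lambda_{j-1}$.

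This reduction is the main obstacle. The key lemma is that in the outcome-deterministic case the no-disturbance equation \eqref{eq:no-disturbance} forces, for every $\lambda'$ in the support of $\Gamma_{A_i}(\cdot\vert\lambda, f_{A_i}(\lambda))$, the equality $f_{A_j}(\lambda') = f_{A_j}(\lambda)$: the left-hand side of \eqref{eq:no-disturbance} is an average of the $\{0,1\}$-values $\xi_{A_j}(a'\vert\lambda')$ against the normalized transfer weights, and this average must equal the $\{0,1\}$-value $\xi_{A_j}(a'\vert\lambda)$, so every term in the support must already take that value. Applying this lemma to each pair $(A_i,A_j)$ with $i<j$ (all of which satisfy $A_i \nd A_j$ by Definition~\ref{def:ND-HVM}) and chaining along $\lambda_0 \to \cdots \to \lambda_{j-1}$ yields $f_{A_j}(\lambda_{j-1}) = \cdots = f_{A_j}(\lambda_0)$ on every trajectory of nonzero weight. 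Substituting into \eqref{eq:NSFF-general-empirical-model}, the deterministic response factors restrict the sum to trajectories with $a_i = f_{A_i}(\lambda_0)$, the transfer functions telescope to one, and I obtain $h_S(o) = \sum_{\lambda_0:\, \forall i,\, f_{A_i}(\lambda_0)=a_i}\mu(\lambda_0)$. Consistency is then automatic, inconsistent outcomes receive weight zero, and consistent ones reproduce $\restr{d}{\bar S}(o_{\bar S})$, establishing \eqref{eq:non-contextuality-sequential}.
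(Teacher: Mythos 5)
Your proposal is correct and takes essentially the same route as the paper: your key lemma (that under outcome determinism, no-disturbance forces $\xi_{A_j}(\cdot\vert\lambda') = \xi_{A_j}(\cdot\vert\lambda)$ for every $\lambda'$ in the support of the transfer function, because a normalized average of $\{0,1\}$-values can equal a $\{0,1\}$-value only if all terms in the support agree) is exactly Lemma~\ref{lemma:ND-for-OD}, and the subsequent chaining along the trajectory, the definition of the global distribution $d$, and the treatment of inconsistent and repeated outcomes all coincide with the paper's argument. Your forward direction via the canonical outcome-deterministic, non-invasive HVM on $\Lambda = O_X$ is likewise the construction the paper invokes (via Prop.~3.1 of the sheaf-theoretic framework together with Dirac transfer functions), just written out more explicitly.
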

\noindent The proof is developed in appendix~\ref{proof:ND-and-OD-are-NC}.

We now turn to two seminal examples in the literature for contextuality. One is the KCBS scenario~\cite{klyachko2008SimpleTestHidden} which we already partially covered in Example~\ref{ex:seq-KCBS} and the other is the Peres-Mermin square~\cite{peres1990IncompatibleResultsQuantum,mermin1993HiddenVariablesTwo}. We show below that both examples fit in our definitions and still witness contextuality in sequential scenarios. In both examples below we will make use of Lüders instruments~\cite{busch2009LudersRule}: given a quantum POVM $E^{x} = \{E^{x,y}\}_{y \in O_{E^{x}}}$ its Lüders instrument is defined for all $y \in O_{E^{x}}$ as $\mathcal{I}_{E^{x}}^{y}(\rho) = \sqrt{E^{x,y}} \rho \sqrt{E^{x,y}}$. 
The derivation of the inequalities below can be done following the same reasoning as in standard contextuality scenarios, and we formalize this general connection in section~\ref{sec:meas-scen-to-seq} with Theorem~\ref{thm:induced_sequential_scenarios}. These examples now allow us to interpret these inequalities in the sequential scenario versions.

\begin{example}[KCBS non-contextuality in sequential scenarios] \label{ex:KCBS-NC}
		We start from Example~\ref{ex:seq-KCBS} with the sequential scenario $\XMOSEQ[\textup{KCBS}]$. Suppose that you can operationally justify determinism and non disturbance, i.e. $A \nd B$, $B \nd C$,\ldots, $E \nd A$. Then the following inequality~\cite{klyachko2008SimpleTestHidden,cabello2014GraphTheoreticApproachQuantum} holds by virtue of Theorem~\ref{thm:ND-and-OD-are-NC}:
		\begin{equation} \label{eq:KCBS-ineq}
			\begin{split}
				p(a = b  \vert A_1, B_2) +& p(b = c  \vert B_1, C_2) + p(c = d  \vert C_1, D_2)\\
				&\quad + p(d = e  \vert D_1, E_2) + p(e = a  \vert E_1, A_2) \ge 1
			.\end{split}
		\end{equation} 
		Let us turn to the quantum realization of the KCBS inequality, which as already been derived~\cite{klyachko2008SimpleTestHidden,cabello2013SimpleHardyLikeProof}, and where we shall only show that it applies for instruments as well.
		To each instrument $A$ in $X_{\KCBS}$ we will associate a dichotomic quantum instrument $\mathcal{I}^{A} = \{\mathcal{I}^{A}_{j}\}_{j=0,1}$ where $\mathcal{I}^{A}_{j}(\rho) = P^{A}_{j} \rho P^{A}_{j}$ and each projector $P^{A}_{0}$ is given by $\ketbra{\psi_{A}}{\psi_{A}}$ with:
		\begin{align*}
			\ket{\psi_{A}} &= \frac{1}{N} \begin{pmatrix} 1\\ 0 \\ \sqrt{\cos\theta}  \end{pmatrix} & \ket{\psi_B} &= \frac{1}{N} \begin{pmatrix} \cos\left( 4 \theta \right) \\ \sin\left( 4 \theta \right)  \\ \sqrt{\cos\theta}  \end{pmatrix}\\
			\ket{\psi_{C}} &= \frac{1}{N} \begin{pmatrix} \cos\left( 2\theta \right) \\ -\sin\left( 2\theta \right)  \\ \sqrt{\cos\theta} \end{pmatrix} & \ket{\psi_D} &= \frac{1}{N} \begin{pmatrix} \cos\left( 2\theta \right) \\ \sin\left( 2 \theta \right)  \\ \sqrt{\cos\theta}  \end{pmatrix}\\
			\ket{\psi_E} &= \frac{1}{N} \begin{pmatrix} \cos\left( 4\theta \right) \\ -\sin\left( 4\theta \right)  \\ \sqrt{\cos\theta}  \end{pmatrix} 
		.\end{align*}
	Where $\theta = \sfrac{\pi}{5}$, $N = \sqrt{1 + \cos\theta}$ and:
	\begin{equation}
		P_{1}^{A} =  \mathbbm{1} - P_0^{A} 
	.\end{equation}
	Where $\mathbbm{1}$ is the identity for a qutrit. Since we have pairwise commutation for all projectors then the quantum instruments must be non-disturbing~\cite{heinosaari2010NondisturbingQuantumMeasurements} -- see Definition~\ref{def:quantum_non-disturbing_measurements_from--cite-heinosaari2010nondisturbing} -- $\mathcal{I}^{A} \nd \mathcal{I}^{B}$, \ldots, $\mathcal{I}^{D} \nd \mathcal{I}^{E}$, $\mathcal{I}^{E} \nd \mathcal{I}^{A}$.
	Yet for the state $\rho = \ketbra{\varphi}{\varphi}$ where $\bra{\varphi} = (0, 0, 1)$, we observe a violation of inequality~\eqref{eq:KCBS-ineq}. In fact, in the quantum case, for any two dichotomic instruments $A, B \in X_{\text{KCBS}}$ we have:
	\begin{equation} \label{eq:probability_equality_quantum_instruments}
	p_{Q}(a = b  \mid A_1 B_2) = \Tr\left[ \mathcal{I}^{B}_0 \left( \mathcal{I}^{A}_0\left( \rho \right)  \right) \right]  + \Tr \left[  \mathcal{I}^{B}_1 \left( \mathcal{I}^{A}_1\left( \rho \right)  \right) \right]  
	.\end{equation} 
	Using Equation~\eqref{eq:probability_equality_quantum_instruments} to compute Equation~\eqref{eq:KCBS-ineq} we obtain:
	\begin{equation}
		\begin{split}
			p_{Q}(a = b  \vert A_1, B_2) +& p_{Q}(b = c  \vert B_1, C_2) + p_{Q}(c = d  \vert C_1, D_2)\\
			&\quad + p_{Q}(d = e  \vert D_1, E_2) + p_{Q}(e = a  \vert E_1, A_2) = 5 - 2 \sqrt{5} 
		.\end{split}
	\end{equation} 
	With $5 - 2\sqrt{5} \approx 0.53 < 1$ we obtain a violation of the inequality.
\end{example}

\begin{example}[Peres-Mermin non-contextuality in sequential scenarios] \label{ex:PM-NC-in-sequential-scenarios}
	We show that the Peres-Mermin Square~\cite{peres1990IncompatibleResultsQuantum,mermin1993HiddenVariablesTwo,budroni2022KochenSpeckerContextuality} can be encompassed in this framework if it is done sequentially, following the same steps as the KCBS example above.
	First, we start by ordering the instruments in the contexts, to give sequences, which is also done in Ref~\cite{kirchmair2009StateindependentExperimentalTest} for example. Thus, the sets in $\XMOSEQ[\textup{PM}]$ are:  
	\begin{align}
	    \begin{split}
		X_\textup{PM} &= \left\{ A, B, C, \mathcal{A}, \mathcal{B}, \mathcal{C}, \alpha, \beta, \gamma \right\}
	    \end{split}
	    \\[2ex]
	    \begin{split}
		    \mathcal{S}_\textup{PM} &= \left\{ \left\{A_1, B_2, C_3\right\}, \left\{ \mathcal{A}_1, \mathcal{B}_2, \mathcal{C}_3\right\}, \left\{\alpha_1, \beta_2, \gamma_3 \right\}\right.\\
					    &\qquad \left. \left\{ A_1, \mathcal{A}_{2}, \alpha_3\right\}, \left\{B_1, \mathcal{B}_2, \beta_3 \right\}, \left\{ C_1, \mathcal{C}_2, \gamma_3 \right\}\right\}
	    \end{split}
	    \\[2ex]
	    \begin{split}
		    \forall x \in X\colon \left( O_\textup{PM} \right)_x &= \{-1, 1\}
	    \end{split}
	\end{align}
	Assuming that no-disturbance and outcome determinism holds, from experiments results for instance, then the following inequality holds~\cite{cabello2008ExperimentallyTestableStateIndependent}\footnote{This inequality has been debated in~\cite{krishna2017DerivingRobustNoncontextuality} from a generalized contextuality point of view.}:
	\begin{equation}\label{eq:PM-inequality}
		\begin{split}
			&p( 1  \mid A_1 B_2 C_3) + p( 1  \mid \mathcal{A}_1 \mathcal{B}_2 \mathcal{C}_3) + p( 1 \mid \alpha_1 \beta_2 \gamma_3) \\
			&\quad + p(1  \mid A_1 \mathcal{A}_1 \alpha_1) + p(1  \mid B_1 \mathcal{B}_2 \beta_3) + p(-1  \mid C_1 \mathcal{C}_2 \gamma_3) \le 5
		.\end{split}
	\end{equation} 
	Where $p( \pm 1  \mid S)$ is the probability that product of the outcomes of the instruments in the sequence $S$ to be $\pm 1$.
	The PM square inequality can be violated in quantum theory with observables of the form $O^{A} = P^{A}_0 - P^{A}_{1}$ such that the instruments $\mathcal{I}^{A}$ are defined by Lüders instruments, in the same way as Example~\ref{ex:KCBS-NC}. The observables are the following:
	\begin{align*}
		O^{A} &= \sigma_z \otimes \mathbbm{1} & O^{B} &= \mathbbm{1} \otimes \sigma_z & O^{C} &= \sigma_z \otimes \sigma_z\\
		O^{\mathcal{A}} &= \mathbbm{1} \otimes \sigma_x & O^{\mathcal{B}} &= \sigma_x \otimes \mathbbm{1} & O^{\mathcal{C}} &= \sigma_x \otimes \sigma_x \\
		O^{\alpha} &= \sigma_z \otimes \sigma_x & O^{\beta} &= \sigma_x \otimes \sigma_z & O^{\gamma} &= \sigma_y \otimes \sigma_y 
	.\end{align*}
	Where $\sigma_{x,y,z}$ are the Pauli matrices and $\mathbbm{1}$ is the qubit identity.

	Similarly to the previous example, we see that all measurements are projective and commuting within a sequence $S \in \mathcal{S}$. The associated quantum instruments must therefore be non-disturbing:
	\begin{equation}
		\begin{split}
			\forall S \in \mathcal{S}_{\textup{PM}}\colon \forall A_i, B_j \in S\\
			j > i \implies \mathcal{I}^{A_i} \nd \mathcal{I}^{B_j}
		.\end{split}
	\end{equation} 
	We can now rewrite the probabilities in terms of trace of the instruments $A,B,C \in X$ for any quantum state $\rho$:
	\begin{equation}
		\begin{split}
			p_{Q}( 1  \mid A_1 B_2 C_3 ) &= \Tr\left[ \mathcal{I}^{C}_1 \left( \mathcal{I}^{B}_1 \left( \mathcal{I}^{A}_1 \left( \rho \right)  \right)  \right)  \right] + \Tr\left[ \mathcal{I}^{C}_1 \left( \mathcal{I}^{B}_{-1} \left( \mathcal{I}^{A}_{-1} \left( \rho \right)  \right)  \right)  \right] \\
					     &\quad + \Tr\left[ \mathcal{I}^{C}_{-1} \left( \mathcal{I}^{B}_{-1} \left( \mathcal{I}^{A}_{1} \left( \rho \right)  \right)  \right)  \right] + \Tr\left[ \mathcal{I}^{C}_{-1} \left( \mathcal{I}^{B}_{1} \left( \mathcal{I}^{A}_{-1} \left( \rho \right)  \right)  \right)  \right]
		\end{split}
	.\end{equation}
	Where $p(-1  \mid A_1 B_2 C_3) = 1 - p(1  \mid A_1 B_2 C_3)$.
	Finally, from the quantum example we get the following violation of Inequality~\eqref{eq:PM-inequality}:
	\begin{equation*} \label{eq:PM-inequality-violation}
		\begin{split}
			&p_{Q}( 1  \mid A_1 B_2 C_3) + p_{Q}( 1  \mid \mathcal{A}_1 \mathcal{B}_2 \mathcal{C}_3) + p_{Q}( 1 \mid \alpha_1 \beta_2 \gamma_3) \\
			&\quad + p_{Q}(1  \mid A_1 \mathcal{A}_1 \alpha_1) + p_{Q}(1  \mid B_1 \mathcal{B}_2 \beta_3) + p_{Q}(-1  \mid C_1 \mathcal{C}_2 \gamma_3) = 6
		.\end{split}
	\end{equation*}
	As $6 > 5$ this concludes the example.
\end{example}

\subsection{No-disturbance with outcome independence is equivalent to non-contextuality for sequential scenarios}

In this section we show that in a sequential scenario, an empirical model is non-contextual if there exists a ND OI HVM which realizes it, as long as we restrict ourselves to sequences $S \in \mathcal{S}$ of $2$ instruments or less.
Restricting to sequences of length less that $2$ still covers numerous interesting contextuality scenarios, such as all the n-cycle scenarios~\cite{araujo2013AllNoncontextualityInequalities}. We formalize it with the following theorem:
\begin{theorem}[ND and OI HVM are equivalent to non-contextuality] \label{thm:ND-and-OI-are-NC}
    Let an empirical model $e$ on a sequential scenario $\XMOSEQ$ such that:
    \begin{equation}
        \forall S \in \mathcal{S}\colon \norm*{S} \leq 2
    \end{equation}
    This empirical model is non-contextual if and only if there exists a ND OI HVM $\HVM$ respecting the NBTS condition which realizes it.
\end{theorem}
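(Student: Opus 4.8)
The plan is to prove both implications, treating the reverse direction (existence of a ND OI HVM implies non-contextuality) as the substantial one, since the hypothesis $\norm*{S}\le 2$ is exactly what makes it go through.

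For the forward direction, suppose $e$ is non-contextual with witnessing global distribution $d\colon O_X \to [0,1]$. I would reuse a canonical construction: set $\Lambda = O_X$, take $\mu = d$, deterministic response functions $\xiid{A}{a}{\lambda} = \delta_{a,\restr{\lambda}{A}}$, and non-invasive transfer functions $\Gammaid{A}{\lambda'}{\lambda, a} = \delta_{\lambda,\lambda'}$. The transfer functions do not depend on the outcome, so the HVM is OI; since non-invasive instruments never disturb (as computed for $A_{\textup{NI}}$ after Definition~\ref{def:no-disturbance}) it is ND; and it manifestly respects NBTS. Substituting into Equation~\eqref{eq:NSFF-general-empirical-model} and collapsing every transfer via the Kronecker deltas forces all intermediate variables to equal $\lambda_0$, so $h_S(o)$ equals $\sum_{\lambda_0 : \restr{\lambda_0}{\bar{S}} = o_{\bar{S}}} \mu(\lambda_0) = \restr{d}{\bar{S}}(o_{\bar{S}})$ on consistent outcomes and vanishes otherwise, i.e. it realizes $e$. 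Note this half works for sequences of any length.

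For the reverse direction, suppose a ND OI HVM $\HVM$ realizes $e$. First I would record the form no-disturbance takes under OI: summing Equation~\eqref{eq:no-disturbance} over $a$, using $\sum_a \xiid{A}{a}{\lambda}=1$ and outcome independence of $\Gamma_A$, gives $\sum_{\lambda'} \Gamma_A(\lambda'\vert\lambda)\, \xiid{B}{b}{\lambda'} = \xiid{B}{b}{\lambda}$ for all $\lambda, b$ whenever $A \nd B$. Now take any length-two context $S = \set{A_1, B_2}$. Expanding $e_S$ via Equation~\eqref{eq:NSFF-general-empirical-model}, I would sum out the trailing transfer $\Gamma_B$ by normalization, use OI to drop the outcome dependence of $\Gamma_A$, and then apply the simplified ND identity to the inner sum over $\lambda_1$ (legitimate because $\xiid{A}{a}{\lambda_0}$ is independent of $\lambda_1$), collapsing the expression to
\begin{equation*}
    e_S(a,b) = \sum_{\lambda} \mu(\lambda)\, \xiid{A}{a}{\lambda}\, \xiid{B}{b}{\lambda}.
\end{equation*}
With every context in this fully factorized form, I would define $d\colon O_X \to [0,1]$ by $d(s) = \sum_{\lambda} \mu(\lambda) \prod_{A \in X} \xiid{A}{\restr{s}{A}}{\lambda}$, which is a valid distribution by Equations~\eqref{eq:normalization-response-functions} and~\eqref{eq:positive-response-functions}. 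Marginalizing $d$ onto $\bar{S}=\set{A,B}$ and using $\sum_{a}\xiid{C}{a}{\lambda}=1$ for every $C \notin \bar{S}$ leaves exactly $\sum_\lambda \mu(\lambda)\,\xiid{A}{a}{\lambda}\,\xiid{B}{b}{\lambda} = e_S(a,b)$, which verifies Equation~\eqref{eq:non-contextuality-sequential} and hence non-contextuality.

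The main obstacle is precisely why the statement is restricted to $\norm*{S}\le 2$. For a length-three context $\set{A_1,B_2,C_3}$ one can still eliminate the last transfer via $B \nd C$, but one is then left with $\sum_{\lambda_1}\Gamma_A(\lambda_1\vert\lambda_0)\,\xiid{B}{b}{\lambda_1}\,\xiid{C}{c}{\lambda_1}$: a sum of $\Gamma_A$ against a \emph{product} of two response functions sharing the intermediate variable $\lambda_1$. The no-disturbance identity only controls $\Gamma_A$ paired with a single response function, never such a product, so the trajectory cannot be collapsed to $\xiid{B}{b}{\lambda_0}\,\xiid{C}{c}{\lambda_0}$ and the global factorization fails; genuine outcome-mediated correlations carried by the intermediate state survive. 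I would also flag the degenerate repeated-instrument context $\set{A_1,A_2}$, where the same collapse gives $\sum_\lambda \mu(\lambda)\,\xiid{A}{a_1}{\lambda}\,\xiid{A}{a_2}{\lambda}$; this is consistent (supported on $a_1=a_2$) only when the response functions are $\{0,1\}$-valued, so the clean argument targets contexts of distinct instruments, as in the $n$-cycle scenarios this theorem is meant to cover.
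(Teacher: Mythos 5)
Your proof follows essentially the same route as the paper's: your simplified no-disturbance identity under outcome independence is exactly the paper's Lemma~\ref{lemma:ND-assumption-OI}, the collapse of length-two contexts to $\sum_{\lambda}\mu(\lambda)\,\xi_{A}(a\vert\lambda)\,\xi_{B}(b\vert\lambda)$ is the paper's main step, and the construction of the global distribution $d$ together with the converse via deterministic response functions and Dirac transfer functions is precisely what the paper imports from the proof of Theorem~\ref{thm:ND-and-OD-are-NC}. Your closing observation about repeated-instrument contexts $\set{A_1,A_2}$ is a genuine subtlety that the paper's proof glosses over: without outcome determinism the factorized form need not vanish on inconsistent outcomes (a fair coin flip with a non-invasive transfer function is ND and OI yet yields $e_S(a_1,a_2)=\sfrac{1}{4}$ for $a_1\neq a_2$), so the reverse implication as stated implicitly requires the length-two sequences to consist of distinct instruments.
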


The full proof is given in the Appendix~\ref{proof:ND-and-OI-are-NC}.

\section{Measurement scenario to sequential scenarios} \label{sec:meas-scen-to-seq}

One of the motivations for studying no-disturbance in sequential scenarios is its operational significance. In this spirit, it is reasonable to ask whether any measurement scenario as described by the sheaf theoretic framework can be transposed to a sequential scenario. We answer by the positive constructively. We moreover show that, for our construction, the non-contextual polytope is conserved, thus the non-contextual inequalities are preserved through what we call the induced sequential scenario. 

\begin{openquestion}
Whether the quantum set is also preserved through this mapping is left open. We strongly believe that it should not be the same, since we are allowing any type of instruments, but at this time we are not able to provide a proof of this claim.
\end{openquestion}

In the following, we first make a quick introduction to measurement scenarios and empirical scenarios in the sheaf theoretic framework for contextuality, then we show how one can map any measurement scenario to a sequential scenario.

\subsection{Measurement scenario à la sheaf theoretic framework}

We restrict ourselves to an informal description of the sheaf theoretic framework, with enough details for the reader to follow. A more thorough description can be found in the original paper~\cite{abramsky2011SheaftheoreticStructureNonlocality} and in subsequent papers~\cite{vallee2024CorrectedBellNoncontextuality,abramsky2017ContextualFractionMeasure}.

A \textit{measurement scenario} is characterized by the tuple $\XMO$ where \begin{enumerate*}[label=(\roman*)]
  \item $X$ is a set of \textit{measurement} labels;
  \item $\mathcal{M} \subseteq \mathcal{P}(X)$ is the set of maximal contexts and forms a cover of X;
  \item $O = (O_x)_{x \in X}$ and $O_x$ is the outcome set corresponding to measurement $x \in X$.
\end{enumerate*}

One can define an \textit{empirical model} $e$ on a measurement scenario $\XMO$ as a family of probability distributions $e = (e_C)_{C\in\mathcal{M}}$, in the same spirit as an empirical model on a sequential scenario.

	The compatibility of marginals, referring to the operational property of no-signalling, parameter independence or \enquote{no-disturbance}, is substantially simplified in measurement scenarios. An empirical model is said to respect the compatibility of marginals if for any two contexts $C_1$ and $C_2$ we have, $\restr{e_{C_1}}{C_1 \cap C_2} = \restr{e_{C_2}}{C_1 \cap C_2}$ where the notation $\restr{e_{C}}{U}$ stands for the marginalization of the probability distribution $e_C$ to $U \subseteq C$. 

\textit{Non-contextuality} is the fact that an empirical model $e$ on a measurement scenario $\XMO$ can be described by a global probability distribution on the outcomes $d\colon O_X \to [0,1]$, i.e., all context wise probability distributions $e_C\colon O_C \to [0,1]$ can be seen as the marginal of $d$, or, mathematically, $e_C = \restr{d}{C}$.

We follow up with the definition of a \textit{hidden variable model}. A HVM on a measurement scenario $\XMO$ is a triple $\tuple{\Lambda, \mu, (h^\lambda)_{\lambda \in \Lambda}}$, where \begin{enumerate*}[label=(\roman*)]
  \item $\Lambda$ is a finite set of hidden variables;
  \item $\mu\colon \Lambda \to [0,1]$ is the distribution over hidden variables;
  \item for all $\lambda$ we have $h^\lambda$ as an empirical model on $\XMO$.
\end{enumerate*}
This gives rise to an overall empirical model given as the average over the various hidden variables $h \coloneq \sum_{\lambda \in \Lambda} \mu(\lambda) h^\lambda$.
Similarly to empirical models, we can impose that a given hidden variable behaviour $h^\lambda$ must be parameter independent, i.e. behaviours which respect the compatibility of marginals, such that for all $C_1, C_2 \in \mathcal{M}$ we have $\restr{h^\lambda_{C_1}}{C_1 \cap C_2} = \restr{h^\lambda_{C_2}}{C_1 \cap C_2}$.
Imposing such a constraint on each of the hidden variable behaviours lead to an empirical model which respect the compatibility of marginals, such that for all $C_1, C_2 \in \mathcal{M}$:
\begin{equation}
\begin{split}
    &\forall \lambda \in \Lambda \colon \restr{h^\lambda_{C_1}}{C_1 \cap C_2} = \restr{h^\lambda_{C_2}}{C_1 \cap C_2} \\
    &\quad \implies \restr{h_{C_1}}{C_1 \cap C_2} = \restr{h_{C_2}}{C_1 \cap C_2}
\end{split}
\end{equation}
Finally, we can define a deterministic, or sharp, hidden variable behaviour, such that $\forall \lambda \colon \forall C \in \mathcal{M}\colon h^\lambda_C(t) = \delta_{t,t'}$ where $t,t' \in O_C$ and $\delta$ is the Dirac distribution.

We write below the conclusion reached in~\cite[Prop. 3.1 and Th. 8.1]{abramsky2011SheaftheoreticStructureNonlocality}:
\begin{proposition}
    An empirical behaviour $e$ on a measurement scenario $\XMO$ is non-contextual if and only if it is realizable by a parameter independent deterministic HVM.
\end{proposition}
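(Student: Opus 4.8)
The plan is to prove both implications by explicitly translating between a global distribution $d$ on $O_X$ and a parameter independent deterministic HVM, the crucial ingredient being that the maximal contexts $\mathcal{M}$ cover $X$.

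For the forward direction (non-contextual $\Rightarrow$ realizable), I would take the global distribution $d\colon O_X \to [0,1]$ guaranteed by non-contextuality and build a HVM whose hidden variables \emph{are} the global assignments themselves. Concretely, set $\Lambda = O_X$, let $\mu(\lambda) = d(\lambda)$, and for each $\lambda \in O_X$ define $h^\lambda_C$ to be the point distribution supported on $\restr{\lambda}{C}$. Each $h^\lambda$ is deterministic by construction and parameter independent because it arises from a single global section, so all of its context-wise marginals automatically agree on overlaps. It then remains to check that averaging recovers $e$: for every $C \in \mathcal{M}$ and $o \in O_C$,
$$h_C(o) = \sum_{\lambda} \mu(\lambda) h^\lambda_C(o) = \sum_{\restr{\lambda}{C} = o} d(\lambda) = \restr{d}{C}(o) = e_C(o),$$
which is a routine computation.

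For the converse (realizable $\Rightarrow$ non-contextual), the key step is a gluing argument. Given a parameter independent deterministic HVM realizing $e$, determinism produces, for each $\lambda$ and each context $C$, a single outcome $\restr{t^\lambda}{C} \in O_C$. Parameter independence forces these local outcomes to agree on overlaps: for $C_1, C_2 \in \mathcal{M}$, the equality $\restr{h^\lambda_{C_1}}{C_1 \cap C_2} = \restr{h^\lambda_{C_2}}{C_1 \cap C_2}$ between two point distributions means their supports coincide on $C_1 \cap C_2$, in particular on any singleton $\{x\} \subseteq C_1 \cap C_2$. Since $\mathcal{M}$ covers $X$, I can therefore define a single global assignment $g_\lambda \in O_X$ by letting $g_\lambda(x)$ be the outcome assigned to $x$ by any context containing it. Setting $d = \sum_\lambda \mu(\lambda)\, \delta_{g_\lambda}$ yields a distribution on $O_X$, and the same averaging identity as above gives $\restr{d}{C} = e_C$ for every $C$, so $e$ is non-contextual.

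The main obstacle is the well-definedness of the gluing in the converse direction: one must verify that when several maximal contexts share the measurement $x$, they all assign it the same deterministic outcome. This is precisely where parameter independence (compatibility of marginals at the level of each hidden variable) is used, together with the covering property of $\mathcal{M}$; the remainder of the argument is bookkeeping with marginals of point distributions.
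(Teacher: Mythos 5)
Your proof is correct, and it is essentially the canonical argument: the paper does not prove this proposition itself but cites it from Abramsky--Brandenburger (Prop.~3.1 and Thm.~8.1 of the sheaf-theoretic paper), and your two constructions --- taking $\Lambda = O_X$ with $\mu = d$ and point-distribution behaviours for the forward direction, and gluing the deterministic local outcomes into a global assignment via parameter independence and the covering property of $\mathcal{M}$ for the converse --- are exactly the ones used there. Nothing is missing; the well-definedness of the gluing is handled correctly by restricting the marginal agreement to singletons in the overlaps.
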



\subsection{Mapping measurement scenarios to sequential scenarios} \label{ssec:ms-to-ss}

From the definitions given in the previous section, we are able to show that any measurement scenario can be converted to a sequential scenario, where the non-contextual set is preserved. This means the notion of no-signalling is replaced by no-disturbance and measurements are turned into instruments.

Let an arbitrary measurement scenario $\XMO$, with measurement labels $X$, contexts $\mathcal{M}$ and outcomes $O = (O_x)_{x \in X}$. For simplicity we will assume to be the same for each measurement, i.e. $\forall x \in X\colon O_x = O$.
From such a measurement scenario, we can define a sequential scenario where each measurement is promoted to an instrument.
In essence, this does not change the scenario, because a scenario is only concerned about the label of the measurements, and not their nature, but sequential scenarios inherently need a notion of causal ordering, which we reflect on the contexts by transforming them to sequences.
Thus, let us decide of an arbitrary ordering of the instruments within the contexts, in other words for each context $C$ the instruments inside $A \in C$ acquire a new index $\ell$ such that $A_\ell$ represents the instrument at step $\ell$. This new ordered context forms a sequence, which we write $S$ and which we relate to $C$ by the notation $S \simeq C$. 
Practically, the ordering may be chosen arbitrarily, or decided such that the conditions of no-disturbance are best respected. With these changes, one can define a sequential scenario $\XMOSEQ[s]$, with $X_s = X$, $\mathcal{S}_s = \left\{ S \right\}_{S \simeq C, C \in \mathcal{M}}$ and the classical outcomes remain unchanged $O_s = O$.
Sequential scenarios obtained by the above steps from a measurement scenario are called \textit{induced} sequential scenarios:
\begin{definition}[Induced sequential scenario]
    A sequential scenario $\XMOSEQ[s]$ is called an induced sequential scenario if there exist a measurement scenario $\XMO$ such that
    \begin{enumerate*}[label=(\roman*)]
        \item $X_s = X$;
	\item $\left\lvert \mathcal{M} \right\rvert = \left\lvert \mathcal{S}_s \right\rvert$;
        \item for all $S \in \mathcal{S}_s$, there exists $C \in \mathcal{M}$ such that $S \simeq C$ up to the index of the instruments and $O_C = O_S$.
    \end{enumerate*}
\end{definition}

For a given measurement scenario, there exists multiple possible induced sequential scenario, where the various ordering of the instruments make the difference. However the opposite is not true, there exists only one measurement scenario associated to an induced sequential scenario, which basically boils down to the fact that the contexts must be the same as the sequential scenario up to the ordering labelling.

We now prove that such an induced sequential scenario preserves the notion of non-contextuality: if an empirical model is non-contextual in the measurement scenario then it is non-contextual in the sequential scenario and vice versa. By Theorem~\ref{thm:ND-and-OD-are-NC} this implies that the existence of a deterministic parameter independent hidden variable model in the measurement scenario is equivalent to the existence of a deterministic non-disturbing hidden variable model in the induced sequential scenario.

\subsection{Non-contextuality holds in induced sequential scenarios}

Given the way induced sequential scenarios are built, it is expected that the non-contextual sets coincide, and we formalize it below:
\begin{theorem} \label{thm:induced_sequential_scenarios}
    Let an empirical model $e$ on a measurement scenario $\XMO$ and an induced sequential scenario $\XMOSEQ$ built from it as described in Section~\ref{ssec:ms-to-ss}. Then if this empirical model $e$ is NC in the measurement scenario, then it is NC in the induced sequential scenario and vice versa.
\end{theorem}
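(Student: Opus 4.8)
The plan is to exploit the defining feature of an induced sequential scenario---that contexts are \emph{sets} of measurements, so no instrument is ever repeated within a sequence---which collapses the consistency apparatus of Definition~\ref{def:non-contextuality-sequential} and makes sequential non-contextuality literally coincide with measurement-scenario non-contextuality, witnessed by the \emph{same} global distribution $d\colon O_X \to [0,1]$.

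First I would record the structural fact underlying the construction of Section~\ref{ssec:ms-to-ss}: since $\mathcal{M} \subseteq \mathcal{P}(X)$, every measurement occurs at most once in a context $C$, and the ordering step producing a sequence $S$ from $C$ only reindexes these measurements. Hence for each $S \in \mathcal{S}$ with associated context $C$ one has $\bar{S} = C$, and by condition (iii) of the induced scenario $O_S = O_C = O_{\bar{S}}$. Along this canonical identification I transport the empirical model, setting $e_S = e_C$, and I note that the marginalizations agree: $\restr{d}{\bar{S}} = \restr{d}{C}$.

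Second, I would observe that the consistency condition of Definition~\ref{def:consistent-outcome} is vacuous here. Equation~\eqref{eq:consistent-outcome} only imposes a constraint when a single instrument appears at several positions of $S$, which never occurs; therefore every $o \in O_S$ is consistent, the ``otherwise'' branch of Equation~\eqref{eq:non-contextuality-sequential} is never selected, and $o_{\bar{S}} = o$ under the identification above. Consequently the sequential requirement reduces to the existence of $d$ with $e_S = \restr{d}{\bar{S}}$ for all $S$, which reads word-for-word as the measurement-scenario condition $e_C = \restr{d}{C}$ once rewritten via $\bar{S} = C$.

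Finally, both implications follow by transporting the one witness $d$. For the forward direction, a global distribution witnessing measurement non-contextuality ($e_C = \restr{d}{C}$) satisfies Equation~\eqref{eq:non-contextuality-sequential} for every $S$ because consistency is automatic, giving sequential non-contextuality. For the converse, evaluating Equation~\eqref{eq:non-contextuality-sequential} on the consistent outcomes---which are all of them---yields $e_C = \restr{d}{C}$ directly. The only genuine work is the bookkeeping of the identification $O_C \cong O_S$ and $e_C \leftrightarrow e_S$ together with its compatibility with marginalization; once this is in place the equivalence is immediate and, in particular, independent of which ordering was chosen to build $\mathcal{S}$.
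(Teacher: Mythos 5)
Your proposal is correct and follows essentially the same route as the paper's proof: both rest on the observation that in an induced sequential scenario every sequence satisfies $S=\bar{S}$ (no repeated instruments), so every outcome is automatically consistent, the ``otherwise'' branch of the sequential non-contextuality condition never fires, and the same global distribution $d$ witnesses non-contextuality in both directions. Your explicit remark that the equivalence is independent of the chosen ordering matches the paper's parenthetical note to the same effect.
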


\begin{proof}
    We start by the first implication, that NC in measurement scenarios implies NC in induced sequential scenarios.
    Let a NC empirical model $e$ on the measurement scenario $\XMO$. Then there exists a global distribution $d\colon O_X \to [0,1]$ such that for each context, the empirical scenario is retrieved as the marginal of that global probability distribution $e_C = \restr{d}{C}$. 
    
    We define an analogous empirical model $e_s$ on the sequential scenario $\XMOSEQ[s]$ as the same empirical model, i.e. it is the same family of probability distributions $e_s \coloneq e$. The only modification in the contexts is the addition of a label to the measurements $A \in C \mapsto A_\ell \in S$. The following is valid for any labelling, as long as the label $\ell$ appears only once in a given sequence, and $\ell \le \norm*{C}$.
    In order to show that the empirical model $e_s$ satisfies the notion of non-contextuality described in Definition~\ref{def:non-contextuality-sequential}, we first note that every sequence $S$ is the same as the set of unordered measurements: $S = \bar{S}$. This holds by definition of our mapping. Secondly, for any sequence $S$, the outcomes $o \in O_S$ are consistent as in Definition~\ref{def:consistent-outcome}, since we have the property $S = \bar{S}$\footnote{$S = \bar{S}$ implies that $O_{S} = O_{\bar{S}}$ and thus $\forall o \in O_{S}$ there must exist $o' \in O_{\bar{S}}$ which is simply defined as $o' = o$}. Because we have that any $o \in O_{S}$ is consistent, we can rewrite Equation~\eqref{eq:non-contextuality-sequential} as:
    \begin{equation}
        e_{S}(o) = \restr{d_s}{\bar{S}}(o_{\bar{S}})
    .\end{equation}
    Which holds for all $S\in \mathcal{S}$ and all $o \in O_{S}$. One can see that $\bar{S} = C$, such that if we take $d_s \equiv d$ we indeed retrieve Equation~\eqref{eq:non-contextuality-sequential}.

    For the other implication, that a NC empirical model in an induced sequential scenario give rise to a NC empirical model in a measurement scenario, the proof follows the same structure.
    Let a non-contextual empirical model $e_s$ in a sequential scenario $\XMOSEQ$. Then there must exist a global probability distribution on the outcomes $d_s\colon O_X \to [0,1]$ such that $e_{S}(o) = \restr{d}{\bar{S}}(o_{\bar{S}})$. Note here that we use the fact previously shown that all outcomes are consistent and $\forall S \in \mathcal{S}\colon S = \bar{S}$. Thus, setting a global distribution in measurement scenario $d \coloneq d_s$, we find that the empirical model in the measurement scenario $e \coloneq e_s$ is also NC, i.e., for all $S \in \mathcal{S}$ and for all $o\in O_S$:
    \begin{subequations}
	    \begin{align}
		e_{S}(o) &= \restr{d_s}{\bar{S}}(o_{\bar{S}})\\
		&= \restr{d}{C}(o_{C}) \\
		&= e_{C}(o)
	    .\end{align}    
    \end{subequations}
    We thus conclude that the empirical model $e$ on the measurement scenario must be non-contextual if $e_s$ is non-contextual.
\end{proof}

Although not surprising, Theorem~\ref{thm:induced_sequential_scenarios} has important consequences, as it allows any contextual scenario in the standard framework to be translated, along with the associated inequalities, measures of contextuality (the contextual fraction) and tools that come with them. In particular, using \cite{abramsky2017ContextualFractionMeasure} one can take experimental data from any sequential version of a contextual inequality and check if it is non-contextual via linear programming. Through Theorems~\ref{thm:ND-and-OD-are-NC} and~\ref{thm:ND-and-OI-are-NC} we have operational interpretations for the possible sequential HVMs.

\section{Discussion and conclusion}

\subsection{Discussion}

We now turn to the discussion of two references which have common objective with this paper, and show how they are related.

\paragraph*{Relation to Ref~\cite{guhne2010CompatibilityNoncontextualitySequential}:} The authors of Reference~\cite{guhne2010CompatibilityNoncontextualitySequential,kirchmair2009StateindependentExperimentalTest} were looking at sequential measurements and the relation to non-contextuality with the assumption of both OD and non-invasiveness. Since this is related to our current work, we investigate and clarify the relation to this paper.
Non-invasiveness means that the state is not disturbed by compatible measurements, which is a particular case of no-disturbance, where the transfer function is simply a Dirac function $\Gamma(\lambda'\vert \lambda, k) = \delta_{\lambda',\lambda}$. In~\cite{guhne2010CompatibilityNoncontextualitySequential}, the authors propose multiple approaches to relax some notion of compatibility. We propose in the following to use only the weaker assumption of no-disturbance instead of the notion of compatibility they introduced for simplifying the notation. They relax the assumption of non-invasiveness by introducing the probability to flip, which they write as:
\begin{equation*} \label{eq:pflip_original_notation}
	\begin{split}
		p^{\text{flip}}[AB] &= p\left[ (b_1 = 1 \mid B_1) \text{ and } (b_2 = 0  \mid A_1 B_2) \right] \\
				      &\quad+ p\left[ (b_1 = 0 \mid B_1) \text{ and } (b_2 = 1  \mid A_1 B_2) \right]
	.\end{split}
\end{equation*} 
Where the instrument $B$ has outcomes $O_B = \left\{0, 1\right\} $. We have interpreted the \enquote{and} as \enquote{at the same time}, which is why they argue that it is not a measurable quantity.
Adapted to our notation this gives:
\begin{equation} \label{eq:p-flip}
	    p^{\textup{flip}}[AB] = \sum_{\lambda_0, \lambda_1, k, k'}\mu(\lambda_0) \xi_{B}(k\mid\lambda_0)\xi_{A}(k'\mid \lambda_0) \Gamma^\varepsilon_A(\lambda_1 \mid \lambda_0, k')\xi_{B}(1-k \mid \lambda_1)
\end{equation}
where $A \nd B$. 

In their papers, the authors do not relax OD, but only focus on the relaxation of compatibility -- here associated to ND. If ND holds perfectly, then $p^{\textup{flip}}[AB]$ is necessarily $0$. The relaxation of ND necessarily comes from the transfer function $\Gamma$, thus one can introduce a noisy version:
\begin{equation}
        \Gamma^\varepsilon_A(\lambda_1 \mid \lambda_0, k') = (1-\varepsilon)\Gamma_A(\lambda_1 \mid \lambda_0, k') + \varepsilon \Gamma_A^{\textup{noisy}}(\lambda_1 \mid \lambda_0, k')
\end{equation}
where $\Gamma_A^{\textup{noisy}}(\lambda_1 \mid \lambda_0, k')$ is a function that chooses a $\lambda_1$ such that any non-disturbing measurement outcome is flipped. The existence of such function is not assured, however it follows the spirit of~\cite{guhne2010CompatibilityNoncontextualitySequential}, and the notion of flipping. With this we have $p^{\textup{flip}}[AB] = \varepsilon$.

However, as argued in~\cite{guhne2010CompatibilityNoncontextualitySequential}, this might not be directly experimentally testable since both $\xi_{B}(k\mid\lambda_0)$ and $\xi_{A}(k'\mid \lambda_0)$ depend on $\lambda_0$ in Equation~\eqref{eq:p-flip} which is not possible because one has to chose which measurement comes first. We can still follow through the paper with our approach, by introducing a new quantity $p^{\textup{err}}[BAB]$:
\begin{equation}
        p^{\textup{err}}[BAB] = \sum_{\substack{\lambda_0, \lambda_1, \lambda_2 \\ k ,k' \in \mathbb{B}}}\mu(\lambda_0) \xi_{B}(k\mid \lambda_0) \Gamma_{B}(\lambda_1 \mid \lambda_0, k) \xi_{A}(k' \mid \lambda_1)\Gamma_{A} (\lambda_2 \mid \lambda_1, k') \xi_{B}(1-k \mid \lambda_2)
\end{equation}
where again $p^{\textup{err}}[BAB] = 0$ as long as OD and ND are respected. The following inequality holds:
\begin{equation}
    p^{\textup{flip}}[AB] \leq p^{\textup{err}}[BAB]\,,
\end{equation}
as we have:
\begin{equation}
    p^{\textup{err}}[BAB] - p^{\textup{flip}}[AB] = (1 - 2\varepsilon)\varepsilon
\end{equation}
and by definition $\varepsilon \leq 0.5$, otherwise it just amounts to flipping the outcomes of the measurements and take $\varepsilon' = 1 - \varepsilon$.
As a conclusion, we acknowledge that there are more approaches used in~\cite{guhne2010CompatibilityNoncontextualitySequential} to account for errors. However, we have decided to restrict ourselves to the first ones to give an example of how one can retrieve existing results with the framework we have developed in this paper.

\paragraph*{Relation to~\cite{abramsky2023CombiningContextualityCausality,searle2024CorrelationCombinatoricsCausal}}

In a recent work Abramsky et al.~\cite{abramsky2023CombiningContextualityCausality,searle2024CorrelationCombinatoricsCausal} developed a framework for causal scenarios, with sequential measurements closely related to this work.
The major differences are twofold, first we emphasize on hidden variable models, with state transformations and operational assumptions.
Second, we allow for repeated instruments, i.e. a sequence can have the same instrument multiple time, at the cost of getting away of the sheaf theoretic mathematical framework.
Furthermore, in this paper, we investigate in more depth the notion of no-disturbance, and which constraints should one put on the HVMs in order to witness non-contextuality. On the other hand, we have not developed the sheaves and categorical approach at the moment which is why we believe that both works are likely complimentary, and it is left as an open avenue whether they can be unified in a more general framework.

While our statement is centred around operational notions and HVMs, a different approach has been taken in Searle's thesis~\cite{searle2024CorrelationCombinatoricsCausal}, where instead they consider a finite memory of the past measurement settings and eventually outcomes, all of it within a sheaf theoretic approach. A central notion is in fact $\mathbb{K}_k$-loopback, which refers to the $k$ previous measurement settings being held in memory, a notion which we have not tackled in this work. Withal, we believe without formal proof, that our definition of a non-contextual hidden variable model for sequential scenario is equivalent to $k=0$ memory. Future interesting avenues would be to look at other operational notions that could motivate different $k$ values of  $\mathbb{K}_k$-lookback, thus following up on the works on the memory cost of some correlations~\cite{hoffmann2018StructureTemporalCorrelations,kleinmann2011MemoryCostQuantum,budroni2019MemoryCostTemporal,budroni2019ContextualityMemoryCost,searle2024CorrelationCombinatoricsCausal}. We should add that Theorem~\ref{thm:induced_sequential_scenarios} is likely a special case of~\cite[Theorem 5.3.1]{searle2024CorrelationCombinatoricsCausal}, if we assume that in fact it corresponds to choosing the right value  $k$ for the memory.

\subsection{Conclusion}

To summarize, we have investigated sequential scenarios and their relation to non-contextuality. We first defined sequential scenarios and empirical behaviours, and subsequently we have looked at the definition of a HVM for sequential scenarios. We proposed some restrictions on this HVM that led to an equivalence with the notion of non-contextuality. Furthermore, we also proved that the non-contextual set is a polytope for any sequential scenarios, with a well-defined notion of contextual fraction.
Finally, we propose a way to map any measurement scenario to a sequential scenario.

Together this represents a powerful framework for considering non-classical behaviour in sequential scenarios. On the one hand we have a clear formal definition of the non-contextual polytope, which allows many tools to be applied, such as linear programming to certify the violation of an inequality given some experimental data \cite{abramsky2017ContextualFractionMeasure}. On the other hand, in Theorems~\ref{thm:ND-and-OD-are-NC} and~\ref{thm:ND-and-OI-are-NC} we have clear interpretation that violation would imply the statistics cannot be explained by HVMs that are both non-disturbing and outcome deterministic for Theorem~\ref{thm:ND-and-OD-are-NC}, or HVMs that are both non-disturbing and outcome independent for Theorem~\ref{thm:ND-and-OI-are-NC}. 

We leave open many avenues for future research in this work, amongst which, the relaxation determinism and no-disturbance, in the same spirit as~\cite{vallee2024CorrectedBellNoncontextuality}. We believe it is an interesting path since, as clearly pointed out in~\cite{guhne2010CompatibilityNoncontextualitySequential}, the need for noise robust inequalities in sequential scenarios is required to pursue realistic experiments. The hidden variable theory proposed in this work is a good starting point for such relaxations.

Another line of research follows along the line of operational contextuality~\cite{spekkens2005ContextualityPreparationsTransformations}, where one could see sequential instruments as a causal network~\cite{chaves2021CausalNetworksFreedom}, and see how that relates to operational contextuality.

Finally, it is not clear to the authors, how the mapping proposed in Section~\ref{sec:meas-scen-to-seq} works for the quantum case, i.e., is any quantum behaviour in the sequential case mapped to a quantum behaviour in the Bell scenario? This goes along the lines of compiled XOR games~\cite{catani2024ConnectingXORXOR,baroni2024QuantumBoundsCompiled}, and it is partially answered in~\cite{baroni2025QuantitativeQuantumSoundness}. We believe that the reverse mapping does not exist since there is a variety of instruments that are acting non-trivially on the state, however this is still to be proven, and finding what exact subset it matches is of interest on its own.

\section{Acknowledgements}

We acknowledge the help of Matilde Baroni, Davide Rolino and Marco Túlio Quintino who indirectly improved many parts of this paper through formal and informal discussions. We would also like to thank Gaël Massé who took the time to proofread the paper and helped enhance the writing style.
We acknowledge financial support from the French national quantum initiative managed by Agence Nationale de la Recherche in the framework of France 2030 through project EPIQ, with the reference ANR-22-PETQ-0007.

\clearpage

\bibliographystyle{plainnat}
\bibliography{biblio}

@article{abramsky2011SheaftheoreticStructureNonlocality,
  title = {The Sheaf-Theoretic Structure of Non-Locality and Contextuality},
  author = {Abramsky, Samson and Brandenburger, Adam},
  year = 2011,
  month = nov,
  journal = {New Journal of Physics},
  volume = {13},
  number = {11},
  pages = {113036},
  publisher = {IOP Publishing},
  issn = {1367-2630},
  doi = {10.1088/1367-2630/13/11/113036},
  url = {https://iopscience.iop.org/article/10.1088/1367-2630/13/11/113036},
  urldate = {2022-02-01},
  langid = {english}
}

@article{abramsky2017ContextualFractionMeasure,
  title = {The Contextual Fraction as a Measure of Contextuality},
  author = {Abramsky, Samson and Barbosa, Rui Soares and Mansfield, Shane},
  year = 2017,
  month = aug,
  journal = {Physical Review Letters},
  volume = {119},
  number = {5},
  eprint = {1705.07918},
  pages = {050504},
  issn = {0031-9007, 1079-7114},
  doi = {10.1103/PhysRevLett.119.050504},
  url = {http://arxiv.org/abs/1705.07918},
  urldate = {2021-10-18},
  archiveprefix = {arXiv}
}

@misc{abramsky2023CombiningContextualityCausality,
  title = {Combining Contextuality and Causality: A Game Semantics Approach},
  shorttitle = {Combining Contextuality and Causality},
  author = {Abramsky, Samson and Barbosa, Rui Soares and Searle, Amy},
  year = 2023,
  month = jul,
  number = {arXiv:2307.04786},
  eprint = {2307.04786},
  primaryclass = {quant-ph},
  publisher = {arXiv},
  doi = {10.48550/arXiv.2307.04786},
  url = {http://arxiv.org/abs/2307.04786},
  urldate = {2023-08-02},
  archiveprefix = {arXiv}
}

@article{acin2015CombinatorialApproachNonlocality,
  title = {A {{Combinatorial Approach}} to {{Nonlocality}} and {{Contextuality}}},
  author = {Ac{\'i}n, Antonio and Fritz, Tobias and Leverrier, Anthony and Sainz, Ana Bel{\'e}n},
  year = 2015,
  month = mar,
  journal = {Communications in Mathematical Physics},
  volume = {334},
  number = {2},
  pages = {533--628},
  issn = {1432-0916},
  doi = {10.1007/s00220-014-2260-1},
  url = {https://doi.org/10.1007/s00220-014-2260-1},
  urldate = {2023-05-09},
  langid = {english}
}

@article{ahrens2013TwoFundamentalExperimental,
  title = {Two {{Fundamental Experimental Tests}} of {{Nonclassicality}} with {{Qutrits}}},
  author = {Ahrens, Johan and Amselem, Elias and Cabello, Ad{\'a}n and Bourennane, Mohamed},
  year = 2013,
  month = jul,
  journal = {Scientific Reports},
  volume = {3},
  number = {1},
  pages = {2170},
  publisher = {Nature Publishing Group},
  issn = {2045-2322},
  doi = {10.1038/srep02170},
  url = {https://www.nature.com/articles/srep02170},
  urldate = {2023-09-20},
  copyright = {2013 The Author(s)},
  langid = {english}
}

@article{araujo2013AllNoncontextualityInequalities,
  title = {All Noncontextuality Inequalities for the \$n\$-Cycle Scenario},
  author = {Ara{\'u}jo, Mateus and Quintino, Marco T{\'u}lio and Budroni, Costantino and Cunha, Marcelo Terra and Cabello, Ad{\'a}n},
  year = 2013,
  month = aug,
  journal = {Physical Review A},
  volume = {88},
  number = {2},
  pages = {022118},
  publisher = {American Physical Society},
  doi = {10.1103/PhysRevA.88.022118},
  url = {https://link.aps.org/doi/10.1103/PhysRevA.88.022118},
  urldate = {2022-12-06}
}

@misc{baroni2024QuantumBoundsCompiled,
  title = {Quantum Bounds for Compiled {{XOR}} Games and \$d\$-Outcome {{CHSH}} Games},
  author = {Baroni, Matilde and Vu, Quoc-Huy and Bourdoncle, Boris and Diamanti, Eleni and Markham, Damian and {\v S}upi{\'c}, Ivan},
  year = 2024,
  month = jun,
  number = {arXiv:2403.05502},
  eprint = {2403.05502},
  primaryclass = {quant-ph},
  publisher = {arXiv},
  doi = {10.48550/arXiv.2403.05502},
  url = {http://arxiv.org/abs/2403.05502},
  urldate = {2025-02-17},
  archiveprefix = {arXiv}
}

@misc{baroni2025QuantitativeQuantumSoundness,
  title = {Quantitative Quantum Soundness for All Multipartite Compiled Nonlocal Games},
  author = {Baroni, Matilde and Klep, Igor and Leichtle, Dominik and Renou, Marc-Olivier and {\v S}upi{\'c}, Ivan and Tendick, Lucas and Xu, Xiangling},
  year = 2025,
  month = sep,
  number = {arXiv:2509.25145},
  eprint = {2509.25145},
  primaryclass = {quant-ph},
  publisher = {arXiv},
  doi = {10.48550/arXiv.2509.25145},
  url = {https://arxiv.org/abs/2509.25145},
  urldate = {2025-11-26},
  archiveprefix = {arXiv}
}

@article{budroni2019ContextualityMemoryCost,
  title = {Contextuality, Memory Cost and Non-Classicality for Sequential Measurements},
  author = {Budroni, Costantino},
  year = 2019,
  month = sep,
  journal = {Philosophical Transactions of the Royal Society A: Mathematical, Physical and Engineering Sciences},
  volume = {377},
  number = {2157},
  pages = {20190141},
  publisher = {Royal Society},
  doi = {10.1098/rsta.2019.0141},
  url = {https://royalsocietypublishing.org/doi/10.1098/rsta.2019.0141},
  urldate = {2025-09-24}
}

@article{budroni2019MemoryCostTemporal,
  title = {Memory Cost of Temporal Correlations},
  author = {Budroni, Costantino and Fagundes, Gabriel and Kleinmann, Matthias},
  year = 2019,
  month = sep,
  journal = {New Journal of Physics},
  volume = {21},
  number = {9},
  pages = {093018},
  publisher = {IOP Publishing},
  issn = {1367-2630},
  doi = {10.1088/1367-2630/ab3cb4},
  url = {https://doi.org/10.1088/1367-2630/ab3cb4},
  urldate = {2025-10-09},
  langid = {english}
}

@article{budroni2022KochenSpeckerContextuality,
  title = {Kochen-{{Specker}} Contextuality},
  author = {Budroni, Costantino and Cabello, Ad{\'a}n and G{\"u}hne, Otfried and Kleinmann, Matthias and Larsson, Jan-{\AA}ke},
  year = 2022,
  month = dec,
  journal = {Reviews of Modern Physics},
  volume = {94},
  number = {4},
  pages = {045007},
  publisher = {American Physical Society},
  doi = {10.1103/RevModPhys.94.045007},
  url = {https://link.aps.org/doi/10.1103/RevModPhys.94.045007},
  urldate = {2025-11-24},
  langid = {english}
}

@article{buscemi2023UnifyingDifferentNotions,
  title = {Unifying Different Notions of Quantum Incompatibility into a Strict Hierarchy of Resource Theories of Communication},
  author = {Buscemi, Francesco and Kobayashi, Kodai and Minagawa, Shintaro and Perinotti, Paolo and Tosini, Alessandro},
  year = 2023,
  month = jun,
  journal = {Quantum},
  volume = {7},
  pages = {1035},
  publisher = {Verein zur F\"orderung des Open Access Publizierens in den Quantenwissenschaften},
  doi = {10.22331/q-2023-06-07-1035},
  url = {https://quantum-journal.org/papers/q-2023-06-07-1035/},
  urldate = {2024-12-17},
  langid = {british}
}

@incollection{busch2009LudersRule,
  title = {L\"uders {{Rule}}},
  booktitle = {Compendium of {{Quantum Physics}}},
  author = {Busch, Paul and Lahti, Pekka},
  editor = {Greenberger, Daniel and Hentschel, Klaus and Weinert, Friedel},
  year = 2009,
  pages = {356--358},
  publisher = {Springer},
  address = {Berlin, Heidelberg},
  doi = {10.1007/978-3-540-70626-7_110},
  url = {https://doi.org/10.1007/978-3-540-70626-7_110},
  urldate = {2025-09-16},
  isbn = {978-3-540-70626-7},
  langid = {english}
}

@book{busch2009NoInformationDisturbance,
  title = {"{{No Information Without Disturbance}}": {{Quantum Limitations}} of {{Measurement}}},
  shorttitle = {"{{No Information Without Disturbance}}"},
  author = {Busch, Paul},
  year = 2009,
  volume = {73},
  eprint = {0706.3526},
  primaryclass = {quant-ph},
  doi = {10.1007/978-1-4020-9107-0},
  url = {http://arxiv.org/abs/0706.3526},
  urldate = {2025-02-17},
  archiveprefix = {arXiv}
}

@article{cabello2008ExperimentallyTestableStateIndependent,
  title = {Experimentally {{Testable State-Independent Quantum Contextuality}}},
  author = {Cabello, Ad{\'a}n},
  year = 2008,
  month = nov,
  journal = {Physical Review Letters},
  volume = {101},
  number = {21},
  pages = {210401},
  publisher = {American Physical Society},
  doi = {10.1103/PhysRevLett.101.210401},
  url = {https://link.aps.org/doi/10.1103/PhysRevLett.101.210401},
  urldate = {2022-01-18}
}

@misc{cabello2010NonContextualityPhysicalTheories,
  title = {({{Non-}}){{Contextuality}} of {{Physical Theories}} as an {{Axiom}}},
  author = {Cabello, Adan and Severini, Simone and Winter, Andreas},
  year = 2010,
  month = oct,
  number = {arXiv:1010.2163},
  eprint = {1010.2163},
  primaryclass = {quant-ph},
  publisher = {arXiv},
  doi = {10.48550/arXiv.1010.2163},
  url = {http://arxiv.org/abs/1010.2163},
  urldate = {2025-11-24},
  archiveprefix = {arXiv}
}

@article{cabello2013SimpleHardyLikeProof,
  title = {Simple {{Hardy-Like Proof}} of {{Quantum Contextuality}}},
  author = {Cabello, Ad{\'a}n and Badziag, Piotr and Terra Cunha, Marcelo and Bourennane, Mohamed},
  year = 2013,
  month = oct,
  journal = {Physical Review Letters},
  volume = {111},
  number = {18},
  pages = {180404},
  publisher = {American Physical Society},
  doi = {10.1103/PhysRevLett.111.180404}
}

@article{cabello2014GraphTheoreticApproachQuantum,
  title = {Graph-{{Theoretic Approach}} to {{Quantum Correlations}}},
  author = {Cabello, Ad{\'a}n and Severini, Simone and Winter, Andreas},
  year = 2014,
  month = jan,
  journal = {Physical Review Letters},
  volume = {112},
  number = {4},
  pages = {040401},
  publisher = {American Physical Society},
  doi = {10.1103/PhysRevLett.112.040401},
  url = {https://link.aps.org/doi/10.1103/PhysRevLett.112.040401},
  urldate = {2025-12-01}
}

@article{catani2024ConnectingXORXOR,
  title = {Connecting {{XOR}} and {{XOR}}* Games},
  author = {Catani, Lorenzo and Faleiro, Ricardo and Emeriau, Pierre-Emmanuel and Mansfield, Shane and Pappa, Anna},
  year = 2024,
  month = jan,
  journal = {Physical Review A},
  volume = {109},
  number = {1},
  pages = {012427},
  publisher = {American Physical Society},
  doi = {10.1103/PhysRevA.109.012427},
  url = {https://link.aps.org/doi/10.1103/PhysRevA.109.012427},
  urldate = {2025-11-24}
}

@article{chaves2021CausalNetworksFreedom,
  title = {Causal {{Networks}} and {{Freedom}} of {{Choice}} in {{Bell}}'s {{Theorem}}},
  author = {Chaves, Rafael and Moreno, George and Polino, Emanuele and Poderini, Davide and Agresti, Iris and Suprano, Alessia and Barros, Mariana R. and Carvacho, Gonzalo and Wolfe, Elie and Canabarro, Askery and Spekkens, Robert W. and Sciarrino, Fabio},
  year = 2021,
  month = nov,
  journal = {PRX Quantum},
  volume = {2},
  number = {4},
  pages = {040323},
  publisher = {American Physical Society},
  doi = {10.1103/PRXQuantum.2.040323},
  url = {https://link.aps.org/doi/10.1103/PRXQuantum.2.040323},
  urldate = {2024-05-24}
}

@article{chen2023ExperimentalDemonstrationQuantum,
  title = {Experimental Demonstration of Quantum Contextuality with Nine Observables on a Single Photonic Qutrit},
  author = {Chen, Xiao-Xiao and Li, Jian and Aili, Mairikena and Meng, Zhe and Wang, Ya-Jing and Zhang, An-Ning},
  year = 2023,
  month = apr,
  journal = {Physical Review A},
  volume = {107},
  number = {4},
  pages = {042431},
  publisher = {American Physical Society},
  doi = {10.1103/PhysRevA.107.042431},
  url = {https://link.aps.org/doi/10.1103/PhysRevA.107.042431},
  urldate = {2023-06-23}
}

@misc{chiribella2014MeasurementSharpnessCuts,
  title = {Measurement Sharpness Cuts Nonlocality and Contextuality in Every Physical Theory},
  author = {Chiribella, G. and Yuan, X.},
  year = 2014,
  month = apr,
  number = {arXiv:1404.3348},
  eprint = {1404.3348},
  primaryclass = {quant-ph},
  publisher = {arXiv},
  doi = {10.48550/arXiv.1404.3348},
  url = {http://arxiv.org/abs/1404.3348},
  urldate = {2023-05-30},
  archiveprefix = {arXiv}
}

@article{clemente2016NoFineTheorem,
  title = {No {{Fine Theorem}} for {{Macrorealism}}: {{Limitations}} of the {{Leggett-Garg Inequality}}},
  shorttitle = {No {{Fine Theorem}} for {{Macrorealism}}},
  author = {Clemente, Lucas and Kofler, Johannes},
  year = 2016,
  month = apr,
  journal = {Physical Review Letters},
  volume = {116},
  number = {15},
  pages = {150401},
  publisher = {American Physical Society},
  doi = {10.1103/PhysRevLett.116.150401},
  url = {https://link.aps.org/doi/10.1103/PhysRevLett.116.150401},
  urldate = {2024-06-05}
}

@book{dariano2017QuantumTheoryFirst,
  title = {Quantum {{Theory}} from {{First Principles}}: {{An Informational Approach}}},
  shorttitle = {Quantum {{Theory}} from {{First Principles}}},
  author = {D'Ariano, Giacomo Mauro and Chiribella, Giulio and Perinotti, Paolo},
  year = 2017,
  publisher = {Cambridge University Press},
  address = {Cambridge},
  doi = {10.1017/9781107338340},
  url = {https://www.cambridge.org/core/books/quantum-theory-from-first-principles/4B583F61C12E168F55FBCC2664ADB750},
  urldate = {2025-04-15},
  isbn = {978-1-107-04342-8}
}

@article{dariano2022IncompatibilityObservablesChannels,
  title = {Incompatibility of Observables, Channels and Instruments in Information Theories},
  author = {D'Ariano, Giacomo M. and Perinotti, Paolo and Tosini, Alessandro},
  year = 2022,
  month = sep,
  journal = {Journal of Physics A: Mathematical and Theoretical},
  volume = {55},
  number = {39},
  eprint = {2204.07956},
  primaryclass = {quant-ph},
  pages = {394006},
  issn = {1751-8113, 1751-8121},
  doi = {10.1088/1751-8121/ac88a7},
  url = {http://arxiv.org/abs/2204.07956},
  urldate = {2025-01-23},
  archiveprefix = {arXiv}
}

@book{davies1976QuantumTheoryOpen,
  title = {Quantum Theory of Open Systems},
  author = {Davies, E. B.},
  year = 1976,
  edition = {Academic Press},
  publisher = {Academic Press},
  url = {https://www.semanticscholar.org/paper/Quantum-theory-of-open-systems-Davies/ab1ad4ef74d257dd948170dca43944d2f89df4fd},
  urldate = {2025-01-31},
  isbn = {978-0-12-206150-9},
  langid = {english}
}

@misc{dourdent2018ContextualityWitnessQuantum,
  title = {Contextuality, {{Witness}} of {{Quantum Weirdness}}},
  author = {Dourdent, Hippolyte},
  year = 2018,
  month = jan,
  number = {arXiv:1801.09768},
  eprint = {1801.09768},
  primaryclass = {quant-ph},
  publisher = {arXiv},
  doi = {10.48550/arXiv.1801.09768},
  url = {http://arxiv.org/abs/1801.09768},
  urldate = {2025-03-05},
  archiveprefix = {arXiv}
}

@inproceedings{dzhafarov2016ContextualitybyDefaultBriefOverview,
  title = {Contextuality-by-{{Default}}: {{A Brief Overview}} of {{Ideas}}, {{Concepts}}, and {{Terminology}}},
  shorttitle = {Contextuality-by-{{Default}}},
  booktitle = {Quantum {{Interaction}}},
  author = {Dzhafarov, Ehtibar N. and Kujala, Janne V. and Cervantes, Victor H.},
  editor = {Atmanspacher, Harald and Filk, Thomas and Pothos, Emmanuel},
  year = 2016,
  series = {Lecture {{Notes}} in {{Computer Science}}},
  pages = {12--23},
  publisher = {Springer International Publishing},
  address = {Cham},
  doi = {10.1007/978-3-319-28675-4_2},
  isbn = {978-3-319-28675-4},
  langid = {english}
}

@article{gitton2022SolvableCriterionContextuality,
  title = {Solvable {{Criterion}} for the {{Contextuality}} of Any {{Prepare-and-Measure Scenario}}},
  author = {Gitton, Victor and Woods, Mischa P.},
  year = 2022,
  month = jun,
  journal = {Quantum},
  volume = {6},
  pages = {732},
  publisher = {Verein zur F\"orderung des Open Access Publizierens in den Quantenwissenschaften},
  doi = {10.22331/q-2022-06-07-732},
  url = {https://quantum-journal.org/papers/q-2022-06-07-732/},
  urldate = {2025-05-15},
  langid = {british}
}

@article{guhne2010CompatibilityNoncontextualitySequential,
  title = {Compatibility and Noncontextuality for Sequential Measurements},
  author = {G{\"u}hne, O. and Kleinmann, M. and Cabello, A. and Larsson, J.-A. and Kirchmair, G. and Z{\"a}hringer, F. and Gerritsma, R. and Roos, C. F.},
  year = 2010,
  month = feb,
  journal = {Physical Review A},
  volume = {81},
  number = {2},
  eprint = {0912.4846},
  pages = {022121},
  issn = {1050-2947, 1094-1622},
  doi = {10.1103/PhysRevA.81.022121},
  url = {http://arxiv.org/abs/0912.4846},
  urldate = {2022-03-29},
  archiveprefix = {arXiv}
}

@article{gupta2023QuantumContextualityProvides,
  title = {Quantum {{Contextuality Provides Communication Complexity Advantage}}},
  author = {Gupta, Shashank and Saha, Debashis and Xu, Zhen-Peng and Cabello, Ad{\'a}n and Majumdar, A. S.},
  year = 2023,
  month = feb,
  journal = {Physical Review Letters},
  volume = {130},
  number = {8},
  pages = {080802},
  issn = {0031-9007, 1079-7114},
  doi = {10.1103/PhysRevLett.130.080802},
  url = {https://link.aps.org/doi/10.1103/PhysRevLett.130.080802},
  urldate = {2025-11-24},
  langid = {english}
}

@article{guryanova2019ExploringLimitsNo,
  title = {Exploring the Limits of No Backwards in Time Signalling},
  author = {Guryanova, Yelena and Silva, Ralph and Short, Anthony J. and Skrzypczyk, Paul and Brunner, Nicolas and Popescu, Sandu},
  year = 2019,
  month = dec,
  journal = {Quantum},
  volume = {3},
  pages = {211},
  publisher = {Verein zur F\"orderung des Open Access Publizierens in den Quantenwissenschaften},
  doi = {10.22331/q-2019-12-09-211},
  url = {https://quantum-journal.org/papers/q-2019-12-09-211/},
  urldate = {2025-07-08},
  langid = {british}
}

@article{heinosaari2010NondisturbingQuantumMeasurements,
  title = {Nondisturbing Quantum Measurements},
  author = {Heinosaari, Teiko and Wolf, Michael M.},
  year = 2010,
  month = sep,
  journal = {Journal of Mathematical Physics},
  volume = {51},
  number = {9},
  pages = {092201},
  issn = {0022-2488},
  doi = {10.1063/1.3480658},
  url = {https://doi.org/10.1063/1.3480658},
  urldate = {2024-06-04}
}

@article{heinosaari2016InvitationQuantumIncompatibility,
  title = {An Invitation to Quantum Incompatibility},
  author = {Heinosaari, Teiko and Miyadera, Takayuki and Ziman, M{\'a}rio},
  year = 2016,
  month = feb,
  journal = {Journal of Physics A: Mathematical and Theoretical},
  volume = {49},
  number = {12},
  pages = {123001},
  publisher = {IOP Publishing},
  issn = {1751-8121},
  doi = {10.1088/1751-8113/49/12/123001},
  url = {https://dx.doi.org/10.1088/1751-8113/49/12/123001},
  urldate = {2024-07-01},
  langid = {english}
}

@article{heinosaari2019NofreeinformationPrincipleGeneral,
  title = {No-Free-Information Principle in General Probabilistic Theories},
  author = {Heinosaari, Teiko and Lepp{\"a}j{\"a}rvi, Leevi and Pl{\'a}vala, Martin},
  year = 2019,
  month = jul,
  journal = {Quantum},
  volume = {3},
  eprint = {1808.07376},
  primaryclass = {quant-ph},
  pages = {157},
  issn = {2521-327X},
  doi = {10.22331/q-2019-07-08-157},
  url = {http://arxiv.org/abs/1808.07376},
  urldate = {2025-01-30},
  archiveprefix = {arXiv}
}

@article{hoffmann2018StructureTemporalCorrelations,
  title = {Structure of Temporal Correlations of a Qubit},
  author = {Hoffmann, Jannik and Spee, Cornelia and G{\"u}hne, Otfried and Budroni, Costantino},
  year = 2018,
  month = oct,
  journal = {New Journal of Physics},
  volume = {20},
  number = {10},
  pages = {102001},
  publisher = {IOP Publishing},
  issn = {1367-2630},
  doi = {10.1088/1367-2630/aae87f},
  url = {https://doi.org/10.1088/1367-2630/aae87f},
  urldate = {2025-10-09},
  langid = {english}
}

@misc{horodecki2010ContextualityOffersDeviceindependent,
  title = {Contextuality Offers Device-Independent Security},
  author = {Horodecki, Karol and Horodecki, Michal and Horodecki, Pawel and Horodecki, Ryszard and Pawlowski, Marcin and Bourennane, Mohamed},
  year = 2010,
  month = jun,
  number = {arXiv:1006.0468},
  eprint = {1006.0468},
  primaryclass = {quant-ph},
  publisher = {arXiv},
  doi = {10.48550/arXiv.1006.0468},
  url = {http://arxiv.org/abs/1006.0468},
  urldate = {2025-11-24},
  archiveprefix = {arXiv}
}

@article{howard2014ContextualitySuppliesMagic,
  title = {Contextuality Supplies the `Magic' for Quantum Computation},
  author = {Howard, Mark and Wallman, Joel and Veitch, Victor and Emerson, Joseph},
  year = 2014,
  month = jun,
  journal = {Nature},
  volume = {510},
  number = {7505},
  pages = {351--355},
  publisher = {Nature Publishing Group},
  issn = {1476-4687},
  doi = {10.1038/nature13460},
  url = {https://www.nature.com/articles/nature13460},
  urldate = {2022-01-11},
  copyright = {2014 Nature Publishing Group, a division of Macmillan Publishers Limited. All Rights Reserved.},
  langid = {english},
  annotation = {Bandiera\_abtest: a\\
Cg\_type: Nature Research Journals\\
Primary\_atype: Research\\
Subject\_term: Computer science;Information theory and computation;Quantum information\\
Subject\_term\_id: computer-science;information-theory-and-computation;quantum-information}
}

@article{kirchmair2009StateindependentExperimentalTest,
  title = {State-Independent Experimental Test of Quantum Contextuality},
  author = {Kirchmair, G. and Z{\"a}hringer, F. and Gerritsma, R. and Kleinmann, M. and G{\"u}hne, O. and Cabello, A. and Blatt, R. and Roos, C. F.},
  year = 2009,
  month = jul,
  journal = {Nature},
  volume = {460},
  number = {7254},
  eprint = {0904.1655},
  pages = {494--497},
  issn = {0028-0836, 1476-4687},
  doi = {10.1038/nature08172},
  url = {http://arxiv.org/abs/0904.1655},
  urldate = {2022-04-15},
  archiveprefix = {arXiv}
}

@article{kleinmann2011MemoryCostQuantum,
  title = {Memory Cost of Quantum Contextuality},
  author = {Kleinmann, Matthias and G{\"u}hne, Otfried and Portillo, Jos{\'e} R. and Larsson, Jan-{\AA}ke and Cabello, Ad{\'a}n},
  year = 2011,
  month = nov,
  journal = {New Journal of Physics},
  volume = {13},
  number = {11},
  pages = {113011},
  publisher = {IOP Publishing},
  issn = {1367-2630},
  doi = {10.1088/1367-2630/13/11/113011},
  url = {https://doi.org/10.1088/1367-2630/13/11/113011},
  urldate = {2022-05-23},
  langid = {english}
}

@article{klyachko2008SimpleTestHidden,
  title = {Simple {{Test}} for {{Hidden Variables}} in {{Spin-1 Systems}}},
  author = {Klyachko, Alexander A. and Can, M. Ali and Binicio{\u g}lu, Sinem and Shumovsky, Alexander S.},
  year = 2008,
  month = jul,
  journal = {Physical Review Letters},
  volume = {101},
  number = {2},
  pages = {020403},
  publisher = {American Physical Society},
  doi = {10.1103/PhysRevLett.101.020403},
  url = {https://link.aps.org/doi/10.1103/PhysRevLett.101.020403},
  urldate = {2022-01-11}
}

@article{kochen1967ProblemHiddenVariables,
  title = {The {{Problem}} of {{Hidden Variables}} in {{Quantum Mechanics}}},
  author = {Kochen, Simon and Specker, E. P.},
  year = 1967,
  journal = {Journal of Mathematics and Mechanics},
  volume = {17},
  number = {1},
  pages = {59--87},
  publisher = {Indiana University Mathematics Department},
  issn = {0095-9057},
  url = {https://www.jstor.org/stable/24902153},
  urldate = {2022-01-11}
}

@article{kofler2013ConditionMacroscopicRealism,
  title = {Condition for Macroscopic Realism beyond the {{Leggett-Garg}} Inequalities},
  author = {Kofler, Johannes and Brukner, {\v C}aslav},
  year = 2013,
  month = may,
  journal = {Physical Review A},
  volume = {87},
  number = {5},
  pages = {052115},
  publisher = {American Physical Society},
  doi = {10.1103/PhysRevA.87.052115},
  url = {https://link.aps.org/doi/10.1103/PhysRevA.87.052115},
  urldate = {2025-07-17}
}

@article{krishna2017DerivingRobustNoncontextuality,
  title = {Deriving Robust Noncontextuality Inequalities from Algebraic Proofs of the {{Kochen-Specker}} Theorem: The {{Peres-Mermin}} Square},
  shorttitle = {Deriving Robust Noncontextuality Inequalities from Algebraic Proofs of the {{Kochen-Specker}} Theorem},
  author = {Krishna, Anirudh and Spekkens, Robert W. and Wolfe, Elie},
  year = 2017,
  month = dec,
  journal = {New Journal of Physics},
  volume = {19},
  number = {12},
  eprint = {1704.01153},
  primaryclass = {quant-ph},
  pages = {123031},
  issn = {1367-2630},
  doi = {10.1088/1367-2630/aa9168},
  url = {http://arxiv.org/abs/1704.01153},
  urldate = {2025-07-28},
  archiveprefix = {arXiv}
}

@article{lapkiewicz2011ExperimentalNonclassicalityIndivisible,
  title = {Experimental Non-Classicality of an Indivisible Quantum System},
  author = {Lapkiewicz, Radek and Li, Peizhe and Schaeff, Christoph and Langford, Nathan K. and Ramelow, Sven and Wie{\'s}niak, Marcin and Zeilinger, Anton},
  year = 2011,
  month = jun,
  journal = {Nature},
  volume = {474},
  number = {7352},
  pages = {490--493},
  publisher = {Nature Publishing Group},
  issn = {1476-4687},
  doi = {10.1038/nature10119},
  url = {https://www.nature.com/articles/nature10119},
  urldate = {2023-04-13},
  copyright = {2011 Nature Publishing Group, a division of Macmillan Publishers Limited. All Rights Reserved.},
  langid = {english}
}

@article{leifer2005LogicalPrePostSelection,
  title = {Logical {{Pre-}} and {{Post-Selection}} Paradoxes, Measurement-Disturbance and Contextuality},
  author = {Leifer, M. S. and Spekkens, R. W.},
  year = 2005,
  month = nov,
  journal = {International Journal of Theoretical Physics},
  volume = {44},
  number = {11},
  eprint = {quant-ph/0412179},
  pages = {1977--1987},
  issn = {0020-7748, 1572-9575},
  doi = {10.1007/s10773-005-8975-1},
  url = {http://arxiv.org/abs/quant-ph/0412179},
  urldate = {2025-02-14},
  archiveprefix = {arXiv}
}

@article{leifer2013MaximallyEpistemicInterpretations,
  title = {Maximally {{Epistemic Interpretations}} of the {{Quantum State}} and {{Contextuality}}},
  author = {Leifer, M. S. and Maroney, O. J. E.},
  year = 2013,
  month = mar,
  journal = {Physical Review Letters},
  volume = {110},
  number = {12},
  pages = {120401},
  publisher = {American Physical Society},
  doi = {10.1103/PhysRevLett.110.120401},
  url = {https://link.aps.org/doi/10.1103/PhysRevLett.110.120401},
  urldate = {2023-12-11}
}

@article{lostaglio2020CertifyingQuantumSignatures,
  title = {Certifying Quantum Signatures in Thermodynamics and Metrology via Contextuality of Quantum Linear Response},
  author = {Lostaglio, Matteo},
  year = 2020,
  month = dec,
  journal = {Physical Review Letters},
  volume = {125},
  number = {23},
  eprint = {2004.01213},
  primaryclass = {quant-ph},
  pages = {230603},
  issn = {0031-9007, 1079-7114},
  doi = {10.1103/PhysRevLett.125.230603},
  url = {http://arxiv.org/abs/2004.01213},
  urldate = {2022-10-19},
  archiveprefix = {arXiv}
}

@article{mazurek2016ExperimentalTestNoncontextuality,
  title = {An Experimental Test of Noncontextuality without Unphysical Idealizations},
  author = {Mazurek, Michael D. and Pusey, Matthew F. and Kunjwal, Ravi and Resch, Kevin J. and Spekkens, Robert W.},
  year = 2016,
  month = jun,
  journal = {Nature Communications},
  volume = {7},
  number = {1},
  pages = {ncomms11780},
  publisher = {Nature Publishing Group},
  issn = {2041-1723},
  doi = {10.1038/ncomms11780},
  url = {https://www.nature.com/articles/ncomms11780},
  urldate = {2022-11-02},
  copyright = {2016 The Author(s)},
  langid = {english}
}

@article{mermin1993HiddenVariablesTwo,
  title = {Hidden Variables and the Two Theorems of {{John Bell}}},
  author = {Mermin, N. David},
  year = 1993,
  month = jul,
  journal = {Reviews of Modern Physics},
  volume = {65},
  number = {3},
  pages = {803--815},
  publisher = {American Physical Society},
  doi = {10.1103/RevModPhys.65.803},
  url = {https://link.aps.org/doi/10.1103/RevModPhys.65.803},
  urldate = {2022-01-11}
}

@article{peres1990IncompatibleResultsQuantum,
  title = {Incompatible Results of Quantum Measurements},
  author = {Peres, Asher},
  year = 1990,
  month = dec,
  journal = {Physics Letters A},
  volume = {151},
  number = {3},
  pages = {107--108},
  issn = {0375-9601},
  doi = {10.1016/0375-9601(90)90172-K},
  url = {https://www.sciencedirect.com/science/article/pii/037596019090172K},
  urldate = {2022-01-18},
  langid = {english}
}

@article{plavala2023GeneralProbabilisticTheories,
  title = {General Probabilistic Theories: {{An}} Introduction},
  shorttitle = {General Probabilistic Theories},
  author = {Pl{\'a}vala, Martin},
  year = 2023,
  month = sep,
  journal = {Physics Reports},
  volume = {1033},
  eprint = {2103.07469},
  primaryclass = {quant-ph},
  pages = {1--64},
  issn = {03701573},
  doi = {10.1016/j.physrep.2023.09.001},
  url = {http://arxiv.org/abs/2103.07469},
  urldate = {2025-06-16},
  archiveprefix = {arXiv}
}

@article{pusey2015LogicalPrePostselection,
  title = {Logical Pre- and Post-Selection Paradoxes Are Proofs of Contextuality},
  author = {Pusey, Matthew F. and Leifer, Matthew S.},
  year = 2015,
  month = nov,
  journal = {Electronic Proceedings in Theoretical Computer Science},
  volume = {195},
  eprint = {1506.07850},
  primaryclass = {quant-ph},
  pages = {295--306},
  issn = {2075-2180},
  doi = {10.4204/EPTCS.195.22},
  url = {http://arxiv.org/abs/1506.07850},
  urldate = {2025-01-21},
  archiveprefix = {arXiv}
}

@misc{pusey2019ContextualityAccessTomographically,
  title = {Contextuality without Access to a Tomographically Complete Set},
  author = {Pusey, Matthew F. and del Rio, L{\'i}dia and Meyer, Bettina},
  year = 2019,
  month = apr,
  number = {arXiv:1904.08699},
  eprint = {1904.08699},
  primaryclass = {quant-ph},
  publisher = {arXiv},
  doi = {10.48550/arXiv.1904.08699},
  url = {http://arxiv.org/abs/1904.08699},
  urldate = {2025-05-15},
  archiveprefix = {arXiv}
}

@article{ramanathan2012GeneralizedMonogamyContextual,
  title = {Generalized {{Monogamy}} of {{Contextual Inequalities}} from the {{No-Disturbance Principle}}},
  author = {Ramanathan, Ravishankar and Soeda, Akihito and Kurzy{\'n}ski, Pawe{\l} and Kaszlikowski, Dagomir},
  year = 2012,
  month = aug,
  journal = {Physical Review Letters},
  volume = {109},
  number = {5},
  pages = {050404},
  publisher = {American Physical Society},
  doi = {10.1103/PhysRevLett.109.050404},
  url = {https://link.aps.org/doi/10.1103/PhysRevLett.109.050404},
  urldate = {2025-07-17}
}

@article{raussendorf2013ContextualityMeasurementbasedQuantum,
  title = {Contextuality in Measurement-Based Quantum Computation},
  author = {Raussendorf, Robert},
  year = 2013,
  month = aug,
  journal = {Physical Review A},
  volume = {88},
  number = {2},
  pages = {022322},
  publisher = {American Physical Society},
  doi = {10.1103/PhysRevA.88.022322},
  url = {https://link.aps.org/doi/10.1103/PhysRevA.88.022322},
  urldate = {2022-01-11}
}

@article{schmid2021CharacterizationNoncontextualityFramework,
  title = {Characterization of {{Noncontextuality}} in the {{Framework}} of {{Generalized Probabilistic Theories}}},
  author = {Schmid, David and Selby, John H. and Wolfe, Elie and Kunjwal, Ravi and Spekkens, Robert W.},
  year = 2021,
  month = feb,
  journal = {PRX Quantum},
  volume = {2},
  number = {1},
  pages = {010331},
  publisher = {American Physical Society},
  doi = {10.1103/PRXQuantum.2.010331},
  url = {https://link.aps.org/doi/10.1103/PRXQuantum.2.010331},
  urldate = {2023-12-14}
}

@phdthesis{searle2024CorrelationCombinatoricsCausal,
  title = {Correlation and Combinatorics: Causal Contextuality and Spin Systems},
  shorttitle = {Correlation and Combinatorics},
  author = {Searle, Amy Jade},
  year = 2024,
  url = {https://ora.ox.ac.uk/objects/uuid:7f3e70f3-5136-4a64-89e8-858034004181},
  urldate = {2025-09-22},
  school = {University of Oxford}
}

@article{spekkens2005ContextualityPreparationsTransformations,
  title = {Contextuality for Preparations, Transformations, and Unsharp Measurements},
  author = {Spekkens, R. W.},
  year = 2005,
  month = may,
  journal = {Physical Review A},
  volume = {71},
  number = {5},
  pages = {052108},
  publisher = {American Physical Society},
  doi = {10.1103/PhysRevA.71.052108},
  url = {https://link.aps.org/doi/10.1103/PhysRevA.71.052108},
  urldate = {2025-12-01}
}

@article{spekkens2009PreparationContextualityPowers,
  title = {Preparation {{Contextuality Powers Parity-Oblivious Multiplexing}}},
  author = {Spekkens, Robert W. and Buzacott, D. H. and Keehn, A. J. and Toner, Ben and Pryde, G. J.},
  year = 2009,
  month = jan,
  journal = {Physical Review Letters},
  volume = {102},
  number = {1},
  pages = {010401},
  publisher = {American Physical Society},
  doi = {10.1103/PhysRevLett.102.010401},
  url = {https://link.aps.org/doi/10.1103/PhysRevLett.102.010401},
  urldate = {2023-06-07}
}

@misc{spekkens2025QuantumFoundations,
  type = {Presentation},
  title = {Quantum Foundations},
  author = {Spekkens, Robert},
  year = 2025,
  month = jun,
  address = {Z\"urich},
  url = {https://uzh.mediaspace.cast.switch.ch/playlist/dedicated/112010/0_17g05549/0_c3xmy74x},
  urldate = {2025-10-10},
  langid = {english}
}

@article{vallee2024CorrectedBellNoncontextuality,
  title = {Corrected {{Bell}} and Non-Contextuality Inequalities for Realistic Experiments},
  author = {Vall{\'e}e, Kim and Emeriau, Pierre-Emmanuel and Bourdoncle, Boris and Sohbi, Adel and Mansfield, Shane and Markham, Damian},
  year = 2024,
  month = jan,
  journal = {Philosophical Transactions of the Royal Society A: Mathematical, Physical and Engineering Sciences},
  volume = {382},
  number = {2268},
  pages = {20230011},
  publisher = {Royal Society},
  doi = {10.1098/rsta.2023.0011},
  url = {https://royalsocietypublishing.org/doi/full/10.1098/rsta.2023.0011},
  urldate = {2024-03-12}
}

@article{wright2023InvertibleMapBell,
  title = {Invertible {{Map}} between {{Bell Nonlocal}} and {{Contextuality Scenarios}}},
  author = {Wright, Victoria J. and Farkas, M{\'a}t{\'e}},
  year = 2023,
  month = nov,
  journal = {Physical Review Letters},
  volume = {131},
  number = {22},
  pages = {220202},
  publisher = {American Physical Society},
  doi = {10.1103/PhysRevLett.131.220202},
  url = {https://link.aps.org/doi/10.1103/PhysRevLett.131.220202},
  urldate = {2023-12-14}
}

@article{zhang2013StateIndependentExperimentalTest,
  title = {State-{{Independent Experimental Test}} of {{Quantum Contextuality}} with a {{Single Trapped Ion}}},
  author = {Zhang, Xiang and Um, Mark and Zhang, Junhua and An, Shuoming and Wang, Ye and Deng, Dong-ling and Shen, Chao and Duan, Lu-Ming and Kim, Kihwan},
  year = 2013,
  month = feb,
  journal = {Physical Review Letters},
  volume = {110},
  number = {7},
  pages = {070401},
  publisher = {American Physical Society},
  doi = {10.1103/PhysRevLett.110.070401},
  url = {https://link.aps.org/doi/10.1103/PhysRevLett.110.070401},
  urldate = {2023-06-23}
}

\clearpage

\appendix

\section{Proofs}

\subsection{Proof of Theorem~\ref{thm:ND-equiv-COM}} \label{proof:ND-equiv-COM}

\begin{proof}
    Let an arbitrary hidden empirical model $h$ from a ND HVM $\HVM$ on a sequential scenario $\XMOSEQ$, any two sequences $S, S' \in \mathcal{S}$, and let two index sets $I_S$ and $I_{S'}$ such that:
    \begin{align}
        S &= \left \{ A_k  \middle| k \le I_S, A \in X  \right \} \\
        S' &= \left \{ A_\ell  \middle| \ell \le I_{S'}, A \in X \right \} 
    .\end{align}
    We define $N = \left\lvert S \right\rvert$ and $N' = \left\lvert S' \right\rvert$.
    Let us take the case where there exists an instrument $A$ for which there exists indices $\tilde{k} \in I_S$ and $\tilde{\ell} \in I_{S'}$ such that:
    \begin{align}
	    A_{\tilde{k}} &\in S\\
	    A_{\tilde{l}} &\in S'
    .\end{align}

    We now check that the compatibility of marginal is respected for any hidden empirical model $h^{\lambda_0} = (h^{\lambda_0}_S)_{S \in \mathcal{S}}$\footnote{Here we use the notation $\lambda_0$ instead of  $\lambda$ as it refers to the hidden variable that is given in the sequence, but will change in its course.}:
	\begin{subequations}
		\begin{align}
			\label{eq:NBTS-decomp} \restr{h_S^{\lambda_{0}}}{A_{\tilde{k}}}(a_{\tilde{k}}) &= \sum_{\substack{\lambda_1, \dots, \lambda_{N+1} \\ a_k \neq a_{\tilde{k}}}} \prod_{k=1}^{N+1} \xi_{A_k}(a_k \mid \lambda_{k-1}) \Gamma_{A_k}(\lambda_k \mid \lambda_{k-1}, a_k) \\[2ex]
			\begin{split} \label{eq:split-proof-ND-equiv-COM}
			    &= \sum_{\substack{\lambda_1, \dots, \lambda_{N+1} \\ a_k \neq a_{\tilde{k}}}} \left [\prod_{k=1}^{\tilde{k}-1} \xi_{A_k}(a_k \mid \lambda_{k-1}) \Gamma_{A_k}(\lambda_k \mid \lambda_{k-1}, a_k) \right ] \\
			    &\qquad\qquad\qquad \times \xi_{A_{\tilde{k}}}(a_{\tilde{k}} \mid \lambda_{\tilde{k}-1}) \Gamma_{A_{\tilde{k}}}(\lambda_{\tilde{k}} \mid \lambda_{\tilde{k}-1}, a_{\tilde{k}}) \\
			    &\qquad\qquad\qquad \times \left [\prod_{k=\tilde{k}+1}^{N+1} \xi_{A_k}(a_k \mid \lambda_{k-1}) \Gamma_{A_k}(\lambda_k \mid \lambda_{k-1}, a_k) \right ]
			\end{split} \\[2ex]
			\begin{split} \label{eq:elimination-rhs-proof-ND-equiv-COM}
			    &= \sum_{\substack{\lambda_1, \dots, \lambda_{\tilde{k} - 1} \\ a_k \neq a_{\tilde{k}}}} \left [\prod_{k=1}^{\tilde{k}-1} \xi_{A_k}(a_k \mid \lambda_{k-1}) \Gamma_{A_k}(\lambda_k \mid \lambda_{k-1}, a_k) \right ] \\
			    &\qquad\qquad\qquad \times \xi_{A_{\tilde{k}}}(a_{\tilde{k}} \mid \lambda_{\tilde{k}-1})
			\end{split} \\[2ex]
			\begin{split} \label{eq:ND-p1-proof-ND-equiv-COM}
			    &= \sum_{\substack{\lambda_1, \dots, \lambda_{\tilde{k} - 1} \\ a_k \neq a_{\tilde{k}}}} \left [\prod_{k=1}^{\tilde{k}-2} \xi_{A_k}(a_k \mid \lambda_{k-1}) \Gamma_{A_k}(\lambda_k \mid \lambda_{k-1}, a_k) \right ] \\
			    &\qquad\qquad\qquad \times \xi_{A_{\tilde{k} - 1}}(a_{\tilde{k} - 1} \mid \lambda_{\tilde{k} - 2}) \Gamma_{A_{\tilde{k} - 1}}(\lambda_{\tilde{k} - 1} \mid \lambda_{\tilde{k} - 2}, a_{\tilde{k} - 1}) \xi_{A_{\tilde{k}}}(a_{\tilde{k}} \mid \lambda_{\tilde{k}-1})
			\end{split} \\[2ex]
			\begin{split} \label{eq:ND-p2-proof-ND-equiv-COM}
			    &= \sum_{\substack{\lambda_1, \dots, \lambda_{\tilde{k} - 2} \\ a_k \neq a_{\tilde{k}}}} \left [\prod_{k=1}^{\tilde{k}-2} \xi_{A_k}(a_k \mid \lambda_{k-1}) \Gamma_{A_k}(\lambda_k \mid \lambda_{k-1}, a_k) \right ] \\
			    &\qquad\qquad\qquad \times \xi_{A_{\tilde{k}}}(a_{\tilde{k}} \mid \lambda_{\tilde{k}-2})
			\end{split} \\[2ex]
			&= \xi_{A_{\tilde{k}}}(a_{\tilde{k}} \mid \lambda_0)
		.\end{align}	
	\end{subequations}
	Where $A_k$ is the $k$\textsuperscript{th} measurement of context $S$. We start from the NBTS assumption in Equation~\eqref{eq:NBTS-decomp}. Then we split the instruments that come before and after $A_{\tilde{k}}$ in Equation~\eqref{eq:split-proof-ND-equiv-COM}, and because of the summation we can directly eliminate the second product in Equation~\eqref{eq:elimination-rhs-proof-ND-equiv-COM}, finally we just iteratively apply ND in Equation~\eqref{eq:ND-p1-proof-ND-equiv-COM} and~\eqref{eq:ND-p2-proof-ND-equiv-COM}. We can apply iteratively the ND relation because by definition of a ND HVM we have:
    \begin{equation}
        \forall j < m\colon A_j \nd A_m
    \end{equation}

    We can also apply this same procedure for the sequence $S'$, and it is quite clear that we obtain the same result. Hence:
    \begin{subequations}
	    \begin{align}
		\restr{h_S^{\lambda_{0}}}{A_{\tilde{k}}}(a_{\tilde{k}}) &= \xi_{A_{\tilde{k}}}(a_{\tilde{k}} \mid \lambda_0) \\
		&= \xi_{A_{\tilde{\ell}}}(a_{\tilde{\ell}} \mid \lambda_0) \\
		&= \restr{h_{S'}^{\lambda_{0}}}{A_{\tilde{\ell}}}(a_{\tilde{\ell}})
	    \end{align}
    \end{subequations}
    Concluding the proof on the compatibility of marginals for NBTS and ND HVMs.
\end{proof}

\subsection{Proof of Theorem~\ref{thm:ND-and-OD-are-NC}} \label{proof:ND-and-OD-are-NC}

In order to prove Theorem~\ref{thm:ND-and-OD-are-NC}, we state the following useful lemma:
\begin{lemma} \label{lemma:ND-for-OD}
    Let two deterministic instruments $A$ and $B$ performed in sequence such that $A \nd B$. Then the following holds:
    \begin{equation}
        \begin{split}
            &\forall \lambda \in \Lambda\colon \forall \lambda' \in \supp\big \{\Gamma_{A}(\cdot \mid \lambda, a)\big \} \colon \forall b \in O_{B} \colon\\
            &\quad \xi_{B}(b\mid \lambda') = \xi_{B}(b\mid \lambda) 
        \end{split}
    \end{equation}
    where $\supp\big \{\Gamma_{A}(\cdot \mid \lambda, a)\big \}$ stands for the support of the transfer function $\Gamma_{A}$, i.e., when $\Gamma_{A}$ is non-zero.
\end{lemma}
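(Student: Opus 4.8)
The plan is to exploit the outcome determinism of both instruments to reduce the no-disturbance condition to a rigidity statement about convex combinations of $\{0,1\}$-valued quantities. First I would write out Definition~\ref{def:no-disturbance} for $A \nd A'$, which for every $\lambda$ and every $a' \in O_{A'}$ asserts
\begin{equation*}
    \sum_{a, \lambda'} \xi_{A}(a \vert \lambda)\, \Gamma_{A}(\lambda' \vert \lambda, a)\, \xi_{A'}(a' \vert \lambda') = \xi_{A'}(a' \vert \lambda).
\end{equation*}
Since $A$ is outcome deterministic (Definition~\ref{def:outcome-determinism}), for each fixed $\lambda$ there is a unique outcome $a^\ast$ with $\xi_{A}(a^\ast \vert \lambda) = 1$ and $\xi_{A}(a \vert \lambda) = 0$ for $a \neq a^\ast$, so the sum over $a$ collapses to the single term $a = a^\ast$, leaving
\begin{equation*}
    \sum_{\lambda'} \Gamma_{A}(\lambda' \vert \lambda, a^\ast)\, \xi_{A'}(a' \vert \lambda') = \xi_{A'}(a' \vert \lambda).
\end{equation*}

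The next step is to observe that, by Equations~\eqref{eq:normalized-resampling}--\eqref{eq:resampling-positive}, $\Gamma_{A}(\cdot \vert \lambda, a^\ast)$ is a genuine probability distribution over $\lambda'$, so the left-hand side is a convex combination of the numbers $\xi_{A'}(a' \vert \lambda')$ ranging over $\lambda'$ in $\supp\{\Gamma_{A}(\cdot \vert \lambda, a^\ast)\}$. Because $A'$ is also outcome deterministic, each $\xi_{A'}(a' \vert \lambda') \in \{0,1\}$ and the right-hand side $\xi_{A'}(a' \vert \lambda) \in \{0,1\}$ as well. I would then split into two cases. If $\xi_{A'}(a' \vert \lambda) = 1$, then a convex average of numbers in $[0,1]$ attains its maximal value $1$ only if every term with nonzero weight equals $1$. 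If $\xi_{A'}(a' \vert \lambda) = 0$, then a sum of nonnegative terms vanishes only if every term with nonzero weight is $0$. In both cases $\xi_{A'}(a' \vert \lambda') = \xi_{A'}(a' \vert \lambda)$ for all $\lambda' \in \supp\{\Gamma_{A}(\cdot \vert \lambda, a^\ast)\}$ and all $a' \in O_{A'}$, which is precisely the claim.

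I do not expect a serious obstacle here; the lemma is essentially the rigidity of convex combinations of $0/1$ values. The only point requiring care is the interpretation of the outcome $a$ appearing in the support in the statement: the argument fixes the behaviour of $\xi_{A'}$ on the support of the transfer function associated with the deterministic outcome $a^\ast$ selected by $\lambda$, so I would make explicit that this is the outcome for which $\xi_{A}(a^\ast \vert \lambda) = 1$. I would also emphasise that the conclusion is derived for each $a'$ separately, so no joint reasoning over the outcomes of $A'$ is needed beyond the pointwise determinism already assumed.
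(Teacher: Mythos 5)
Your argument is correct and is essentially the same as the paper's: both collapse the sum over $a$ using the determinism of $A$ to isolate the outcome $a^\ast$ with $\xi_A(a^\ast\vert\lambda)=1$, and then apply the rigidity of a convex combination of $\{0,1\}$-valued quantities by enumerating the two cases $\xi_{A'}(a'\vert\lambda)\in\{0,1\}$. Your remark that the support in the statement should be read as that of $\Gamma_A(\cdot\vert\lambda,a^\ast)$ for the deterministically selected outcome matches the paper's implicit choice of $a_0^{\lambda_0}$.
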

\begin{proof}
    Let us denote by $a^{\lambda_0} \in O_A$ the outcome such that $\xi_{A}(a^{\lambda_0} \mid \lambda_0) = 1$. The existence and uniqueness of such an outcome is given by the OD assumption. By using Equation~\eqref{eq:no-disturbance} on no-disturbance we obtain:
    \begin{equation}
        \forall \lambda_0\colon \forall b \colon \sum_{\lambda_1} \Gamma_{A}(\lambda_1 \mid \lambda_0, a^{\lambda_0}) \xi_{B}(b \mid \lambda_1) = \xi_{B}(b \mid \lambda_0)
    \end{equation}
    We now proceed by enumeration, since $\xi_{B}(b \mid \lambda_0) \in \{0,1\}$.
    First we assume $\xi_{B}(b \mid \lambda_0) = 0$, then necessarily
    \begin{equation}
        \forall \lambda_1 \in \supp(\Gamma_{A}) \colon \xi_{B}(b \mid \lambda_1) = 0
    \end{equation}
    otherwise we would have 
    \begin{equation}
		\begin{split}
			\Gamma_{A}(\lambda_1 \mid \lambda_0, a^{\lambda_0}) \xi_{B}(b \mid \lambda_1) &> 0 \\
												      &\neq \xi_{B}(b \mid \lambda_0)
		.\end{split}
    \end{equation}
    We can extend similarly for $\xi_{B}(b \mid \lambda_0) = 1$:
    \begin{equation}
        \forall \lambda_1 \in \supp(\Gamma_{A}) \colon \xi_{B}(b \mid \lambda_1) = 1
    \end{equation}
    Otherwise we have
    \begin{align}
        \sum_{\lambda_1} \Gamma_{A}(\lambda_1 \mid \lambda_0, a^{\lambda_0}) \xi_{B}(b \mid \lambda_1) &< 1 \\
        &\neq \xi_{B}(b \mid \lambda_0)
    \end{align}
\end{proof}

\noindent We now turn to the proof of Theorem~\ref{thm:ND-and-OD-are-NC}:
\begin{proof}
    Let an empirical model $e$ on a sequential scenario $\XMOSEQ$ and let any context $S \in \mathcal{S}$ with $N = \norm*{S}$. In full generality, we denote the instruments in the context $S$ by $A_i$ where $i$ is the order of the instrument, and each instrument is in $X$.
    Suppose now that we have an OD ND HVM $\HVM$. Then we write $o \in O_S$ as $o=(a_1, a_2, \ldots, a_N)$, and we have:
    \begin{equation}
        e_S(o) = \sum_{\lambda_0,\dots,\lambda_{N+1}} \mu(\lambda_0) \prod_{j=1}^{N}\xi_{A_j}(a_j \mid \lambda_{j-1}) \Gamma_{A_j}(\lambda_j \mid \lambda_{j-1}, a_j)
    \end{equation}
    where $\forall j=1,\dots,N$ we have $a_j$ to be the outcome of instrument $A_j$. With some rearranging, we obtain:
    \begin{equation}
        e_S(o) = \sum_{\lambda_0,\dots,\lambda_{N}} \mu(\lambda_0) \xi_{A_1}(a_1 \mid \lambda_0) \prod_{j=2}^{N} \Gamma_{A_{j-1}}(\lambda_{j-1} \mid \lambda_{j-2}, a_{j-1}) \xi_{A_j}(a_j \mid \lambda_{j-1})
    \end{equation}
    We now use lemma~\ref{lemma:ND-for-OD}, because we have for all $j$ that $A_{j-1} \nd A_{j}$ we can rewrite the above equation as:
    \begin{equation}
        \begin{split}
            e_S(o) =& \sum_{\lambda_0,\dots,\lambda_{N}} \mu(\lambda_0) \xi_{A_1}(a_1 \mid \lambda_0) \xi_{A_2}(a_2 \mid \lambda_0) \\
            &\quad \times \left [\prod_{j=2}^{N-1} \Gamma_{A_{j-1}}(\lambda_{j-1} \mid \lambda_{j-2}, a_{j-1}) \xi_{A_{j+1}}(a_{j+1} \mid \lambda_j) \right ] \\
            &\quad {}\times \Gamma_{A_{N-1}}(\lambda_{N-1} \mid \lambda_{N-2}, a_{N-1})
        \end{split}
    \end{equation}
    By acting iteratively we end up with:
    \begin{equation}
        e_S(o) = \sum_{\lambda_0,\dots,\lambda_{N}} \mu(\lambda_0)  \left [ \prod_{j=1}^{N} \xi_{A_j}(a_j \mid \lambda_0) \right ] \left [\prod_{j=1}^{N-1} \Gamma_{A_j}(\lambda_j \mid \lambda_{j-1}, a_j) \right ]
    \end{equation}
    Where the transfer functions $\Gamma_{A_j}$ can be eliminated by the sum over the right $\lambda_j$:
    \begin{equation} \label{eq:eso_proof_thm_complete}
        e_S(o) = \sum_{\lambda_0} \mu(\lambda_0)  \left [ \prod_{j=1}^{N} \xi_{A_j}(a_j \mid \lambda_0) \right ]
    \end{equation}
    We now show that there exist a global probability distribution on outcomes $d\colon O_X\to [0,1]$ which respects the definition of non-contextuality given in Definition~\ref{def:non-contextuality-sequential}.
    First, if $o$ is not consistent, it means that there exists an instrument $A$ which appears twice in a sequence, at step $i$ and step $j$, but which have different outcomes. This implies $\xi_{A_i}(a_i \mid \lambda_0) \xi_{A_j}(a_j \mid \lambda_0) = 0$ since we have deterministic response functions one of them must be $0$. In terms of the value of $e_S(o)$, we deduce from Equation~\eqref{eq:eso_proof_thm_complete}:
    \begin{equation}
        \begin{split}
            e_S(o) &\propto \xi_{A_i}(a_i \mid \lambda_0) \xi_{A_j}(a_j \mid \lambda_0) \\
            &\propto 0\,.
        \end{split}
    \end{equation}
    Thus, necessarily $e_S(o) = 0$ for inconsistent outcomes.
    For consistent outcomes we can rigorously follow the proof in~\cite[Theorem 8.1]{abramsky2011SheaftheoreticStructureNonlocality}. Let us define the global distribution on an outcome $o \in O_X$, then:
    \begin{equation}
        d(o) \coloneq \sum_{\lambda_0 \in \Lambda} \mu(\lambda_0) \prod_{A \in X} \xi_{A}(a \mid \lambda_0)
    \end{equation}
    We naturally have that $d(o) \geq 0$, and it normalizes to one:
    \begin{equation}
        \begin{split}
            \sum_{o\in O_X} d(o) &= \sum_{o\in O_X}\sum_{\lambda_0 \in \Lambda} \mu(\lambda_0) \prod_{A \in X} \xi_{A}(a \mid \lambda_0) \\
            &= \sum_{\lambda_0 \in \Lambda} \mu(\lambda_0) \prod_{A \in X} \sum_{a} \xi_{A}(a \mid \lambda_0) \\
            &= \sum_{\lambda_0 \in \Lambda} \mu(\lambda_0) \\
            &= 1
        \end{split}
    \end{equation}
    We finally need to check that it marginalizes to the good value for a given context $S \in \mathcal{S}$:
    \begin{subequations}
	    \begin{align}
		\label{eq:marginal-proof-thm-NC} e_S(o) &= \sum_{\lambda_0} \mu(\lambda_0)  \left [ \prod_{j=1}^{N} \xi_{A_j}(a_j \mid \lambda_0) \right ] \\
		\label{eq:marginal-proof-thm-NC2} &= \sum_{\lambda_0} \mu(\lambda_0)  \prod_{A \in \bar{S}} \xi_{A}(a \mid \lambda_0) \\
		&= \restr{d}{\bar{S}}(o_{\bar{S}})
	    .\end{align}
    \end{subequations}
    Where we went from Equation~\eqref{eq:marginal-proof-thm-NC} to Equation~\eqref{eq:marginal-proof-thm-NC2} by using the fact that $\xi_{A}^2(a \mid \lambda_0) = \xi_{A}(a \mid \lambda_0)$ for all $A \in X$, thus ending with non-repeating measurements only.

    The proof for the converse statement, i.e., the existence of a hidden variable model given a global probability distribution, follows naturally from~\cite[Prop. 3.1]{abramsky2011SheaftheoreticStructureNonlocality}, where now for all $S \in \mathcal{S}$ we set
    \begin{equation}
        e_S(o) =  \left\{\begin{matrix*}[l]
 \restr{d}{\bar{S}}(o_{\bar{S}}),\, &\textup{if } o \textup{ is consistent}\\
 0,\,  &\textup{otherwise}
\end{matrix*}\right.
    \end{equation}
    Hence, $e_S(o)$ is explained by a factorizable hidden variable model. Then one can pick any transfer function that induces no-disturbance, such as Dirac delta functions, i.e. $\Gamma(\lambda_1  \mid \lambda_0, a) = \delta_{\lambda_1, \lambda_0}$. As we allow for repeating instruments, one can use the property $\xi_A(a\mid \lambda) = \left [\xi_A(a\mid \lambda)\right ]^n$ which holds for any $n\in \mathbb{N}^+$ when the response functions are deterministic.
\end{proof}

\subsection{Proof of Theorem~\ref{thm:ND-and-OI-are-NC}} \label{proof:ND-and-OI-are-NC}

We start by showing that the notion of ND simplifies with the assumption of OI:
\begin{lemma}[ND with the assumption of OI] \label{lemma:ND-assumption-OI}
    Let and OI ND HVM $\HVM$, then for any $A \nd B$ the following holds:
    \begin{equation}
    \begin{split}
        &\forall \lambda_0\colon \forall a \in O_A\colon \forall b \in O_B\colon\\
        &\quad \sum_{\lambda_1} \Gamma_{A}(\lambda_1 \mid \lambda_0) \xi_{B}(b \mid \lambda_1) = \xi_{B}(b \mid \lambda_0)
    \end{split}
    \end{equation}
\end{lemma}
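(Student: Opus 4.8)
The plan is to start from the unrestricted no-disturbance relation of Definition~\ref{def:no-disturbance} and collapse the intermediate outcome sum using the outcome-independence hypothesis. First I would write out Equation~\eqref{eq:no-disturbance} for the pair $A_1 \nd A_2$ at a fixed input $\lambda_0$, namely
\begin{equation*}
    \sum_{a_1, \lambda_1} \xi_{A_1}(a_1 \vert \lambda_0)\, \Gamma_{A_1}(\lambda_1 \vert \lambda_0, a_1)\, \xi_{A_2}(a_2 \vert \lambda_1) = \xi_{A_2}(a_2 \vert \lambda_0)\,.
\end{equation*}

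Next, since the HVM is outcome independent, Definition~\ref{def:outcome-independence} guarantees that $\Gamma_{A_1}(\lambda_1 \vert \lambda_0, a_1)$ is independent of $a_1$, so I would substitute $\Gamma_{A_1}(\lambda_1 \vert \lambda_0)$ and pull it outside the $a_1$-sum. The sole remaining dependence on $a_1$ then sits in $\xi_{A_1}(a_1 \vert \lambda_0)$, and summing this over $a_1$ yields $1$ by the normalization of response functions, Equation~\eqref{eq:normalization-response-functions}. What survives is exactly
\begin{equation*}
    \sum_{\lambda_1} \Gamma_{A_1}(\lambda_1 \vert \lambda_0)\, \xi_{A_2}(a_2 \vert \lambda_1) = \xi_{A_2}(a_2 \vert \lambda_0)\,,
\end{equation*}
the claimed identity, which holds for every $\lambda_0$ and every outcome $a_2$.

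I do not expect a genuine obstacle: the statement is a direct rewriting of the no-disturbance condition once the transfer function has been decoupled from the outcome. The only subtlety worth flagging is that outcome independence is precisely the property that lets the summation over $a_1$ pass through the transfer function, so that the intermediate outcome is marginalized away without affecting the statistics of $A_2$; absent that hypothesis the two factors depending on $a_1$ would not separate and the simplification would fail.
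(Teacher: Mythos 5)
Your argument is correct and is exactly the paper's proof: write out the no-disturbance condition of Definition~\ref{def:no-disturbance}, use outcome independence to make the transfer function independent of $a_1$, and eliminate the $\xi_{A_1}(a_1\vert\lambda_0)$ factor by summing over $a_1$ via the normalization of response functions. No gaps; nothing further is needed.
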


\begin{proof}
    This comes quite straightforwardly from Definition~\ref{def:no-disturbance}, where the first term $\xi_{A}(a\mid \lambda_0)$ is ruled out by summing over $a$ since the transfer function is no more outcome dependent.
\end{proof}

We now turn to the proof of Theorem~\ref{thm:ND-and-OI-are-NC}.

\begin{proof}
    Let an OI ND HVM $\HVM$ on a sequential scenario $\XMOSEQ$, then we will show that this takes the form of a factorizable HVM. Let $S \in \mathcal{S}$ and two instruments $A_1, B_2 \in S$, by definition, we have $A_1 \nd B_2$ and:
    \begin{subequations}
	    \begin{align}
		    \label{eq:proof-OD-ND-eq-1} p(a_1, a_2 \mid A_1, A_2) &= \sum_{\lambda_0, \lambda_1} \mu(\lambda_0) \xi_{A_1}(a_1 \mid \lambda_0) \Gamma_{A_1}(\lambda_1\mid \lambda_0, a_1) \xi_{A_2}(a_2\mid \lambda_1) \\
		    \label{eq:proof-OD-ND-eq-2}&=\sum_{\lambda_0} \mu(\lambda_0) \xi_{A_1}(a_1 \mid \lambda_0) \xi_{A_2}(a_2 \mid \lambda_0)
	    .\end{align}
    \end{subequations}
    Where we went from Equation~\eqref{eq:proof-OD-ND-eq-1} to Equation~\eqref{eq:proof-OD-ND-eq-2} by using Lemma~\ref{lemma:ND-assumption-OI}.
    Because this is a factorizable HVM, we can follow the same reasoning as applied in the proof of Theorem~\ref{thm:ND-and-OD-are-NC}, in Appendix~\ref{proof:ND-and-OD-are-NC}.
\end{proof}

\end{document}